\definecolor{colKeyword}{rgb}{0.0,0.0,0.0}
\definecolor{colAssert}{rgb}{0.4,0.4,0.4}
\newcommand{\blueRightarrow}{{\color{colKeyword}$\Rightarrow$}}
\newcommand{\blueBar}{{\color{colKeyword}\texttt{|}}}
\newcommand{\blueMult}{{\color{colKeyword}$\times$}}
\lstdefinelanguage{clerical}{
  keywords={true,false,skip,lim,var,in,if,then,else,begin,end,case,while,do,let},
  literate={:=}{{\color{colKeyword}$\;{{:}{=}}\;$}}2 {=>}{\blueRightarrow}2 {|}{\blueBar}1{*}{\blueMult}1,
  keywordstyle=\color{colKeyword}\ttfamily,
  identifierstyle=\itshape,
  escapeinside={[}{]}
}
\lstdefinestyle{clerical_noline}{
  basicstyle=\color{colKeyword}\small,
  mathescape=true,
  xleftmargin=4ex,
  language=clerical,
}
\lstdefinestyle{clerical_line}{
  basicstyle=\color{colKeyword}\small,
  mathescape=true,
  xleftmargin=4ex,
  language=clerical,
  numbers=left,                    
  numbersep=5pt
}
\definecolor{dkgreen}{rgb}{0,0.6,0}
\definecolor{ltblue}{rgb}{0,0.4,0.4}
\definecolor{dkviolet}{rgb}{0.3,0,0.5}
\lstdefinelanguage{Coq}{ 
    mathescape=true,
    texcl=false, 
    morekeywords=[1]{Section, Module, End, Require, Import, Export,
        Variable, Variables, Parameter, Parameters, Axiom, Hypothesis,
        Hypotheses, Notation, Local, Tactic, Reserved, Scope, Open, Close,
        Bind, Delimit, Definition, Let, Ltac, Fixpoint, CoFixpoint, Add,
        Morphism, Relation, Implicit, Arguments, Unset, Contextual,
        Strict, Prenex, Implicits, Inductive, CoInductive, Record,
        Structure, Canonical, Coercion, Context, Class, Global, Instance,
        Program, Infix, Theorem, Lemma, Corollary, Proposition, Fact,
        Remark, Example, Proof, Goal, Save, Qed, Defined, Hint, Resolve,
        Rewrite, View, Search, Show, Print, Printing, All, Eval, Check,
        Projections, inside, outside, Def},
    morekeywords=[2]{forall, exists, exists2, fun, fix, cofix, struct,
        match, with, end, as, in, return, let, if, is, then, else, for, of,
        nosimpl, when},
    morekeywords=[3]{Type, Prop, Set, true, false, option},
    morekeywords=[4]{pose, set, move, case, elim, apply, clear, hnf,
        intro, intros, generalize, rename, pattern, after, destruct,
        induction, using, refine, inversion, injection, rewrite, congr,
        unlock, compute, ring, field, fourier, replace, fold, unfold,
        change, cutrewrite, simpl, have, suff, wlog, suffices, without,
        loss, nat_norm, assert, cut, trivial, revert, bool_congr, nat_congr,
        symmetry, transitivity, auto, split, left, right, autorewrite},
    morekeywords=[5]{by, done, exact, reflexivity, tauto, romega, omega,
        assumption, solve, contradiction, discriminate},
    morekeywords=[6]{do, last, first, try, idtac, repeat},
    morecomment=[s]{(*}{*)},
    showstringspaces=false,
    morestring=[b]",
    morestring=[d],
    tabsize=3,
    extendedchars=false,
    sensitive=true,
    breaklines=false,
    basicstyle=\small,
    captionpos=b,
    columns=[l]flexible,
    identifierstyle={\ttfamily\color{black}},
    keywordstyle=[1]{\ttfamily\color{dkviolet}},
    keywordstyle=[2]{\ttfamily\color{dkgreen}},
    keywordstyle=[3]{\ttfamily\color{ltblue}},
    keywordstyle=[4]{\ttfamily\color{dkblue}},
    keywordstyle=[5]{\ttfamily\color{dkred}},
    stringstyle=\ttfamily,
    commentstyle={\ttfamily\color{dkgreen}},
    literate=
    {\\forall}{{\color{dkgreen}{$\forall\;$}}}1
    {\\exists}{{$\exists\;$}}1
    {<-}{{$\leftarrow\;$}}1
    {||-}{{$\Vdash\;$}}1
    {|-}{{$\vdash\;$}}1
    {|=}{{$\vDash\;$}}1
    {=>}{{$\Rightarrow\;$}}1
    {==}{{\code{==}\;}}1
    {==>}{{\code{==>}\;}}1
    {->}{{$\rightarrow\;$}}1
    {<->}{{$\leftrightarrow\;$}}1
    {<==}{{$\leq\;$}}1
    {\#}{{$^\star$}}1 
    {\\o}{{$\circ\;$}}1 
    {\@}{{$\cdot$}}1 
    {\/\\}{{$\wedge\;$}}1
    {\\\/}{{$\vee\;$}}1
    {++}{{\code{++}}}1
    {~}{{\ }}1
    {\@\@}{{$@$}}1
    {\\mapsto}{{$\mapsto\;$}}1
    {\\hline}{{\rule{\linewidth}{0.5pt}}}1
}[keywords,comments,strings]
\newcommand{\coqcode}[1]{\lstinline[language=Coq]!#1!}
\newcommand{\coqpath}[1]{\texttt{#1}}
\newcommand{\defemph}[1]{\textbf{\emph{#1}}} 
\theoremstyle{plain}
\newtheorem{theorem}{Theorem}[section]
\newtheorem{proposition}[theorem]{Proposition}
\theoremstyle{definition}
\newcommand{\defeq}{\mathrel{{:}{=}}} 
\newcommand{\eqdef}{\mathrel{{=}{:}}} 
\newcommand{\defiff}{\mathrel{{:}{\Leftrightarrow}}} 
\newcommand{\dom}[1]{\mathsf{dom}(#1)} 
\newcommand{\abs}[1]{\lvert #1 \rvert}
\newcommand{\all}[1]{\forall #1 \,.\,}
\newcommand{\some}[1]{\exists #1 \,.\,}
\newcommand{\lthen}{\Rightarrow}
\newcommand{\liff}{\Leftrightarrow}
\newcommand{\IN}{\mathbb{N}}
\newcommand{\IQ}{\mathbb{Q}}
\newcommand{\IR}{\mathbb{R}}
\newcommand{\IZ}{\mathbb{Z}}
\newcommand{\numeral}[1]{\overline{#1}} 
\newcommand{\bnfis}{\mathrel{\;{:}{:}{=}\ }}
\newcommand{\bnfor}{\mathrel{\;\big|\ \ }}
\newcommand{\dR}{\mathsf{R}}
\newcommand{\dZ}{\mathsf{Z}}
\newcommand{\dB}{\mathsf{B}}
\newcommand{\dU}{\mathsf{U}}
\newcommand{\ccoerce}[1]{\iota(#1)} 
\newcommand{\ccase}{\mathtt{case}\;}
\newcommand{\cif}{\mathtt{if}\;}
\newcommand{\cthen}{\;\mathtt{then}\;}
\newcommand{\celse}{\;\mathtt{else}\;}
\newcommand{\cend}{\;\mathtt{end}}
\newcommand{\cwhile}{\mathtt{while}\;}
\newcommand{\climx}{\mathtt{lim}} 
\newcommand{\clim}[2]{\climx \; #1 \,.\, #2}
\newcommand{\cdo}{\;\mathtt{do}\;}
\newcommand{\cin}{\;\mathtt{in}\;}
\newcommand{\cskip}{\mathtt{skip}}
\newcommand{\cletx}{\mathrel{{:}{=}}} 
\newcommand{\clet}[2]{#1 \cletx #2} 
\newcommand{\cvar}[2]{\mathtt{var}\; #1 \cletx #2 \cin} 
\newcommand{\cinv}[1]{{#1}^{-1}} 
\newcommand{\To}{\Rightarrow}
\newcommand{\cfunction}[4]{\mathtt{let}\; #1(#2) : #3 \cletx #4} 
\newcommand{\ctrue}{\mathsf{true}}
\newcommand{\cfalse}{\mathsf{false}}
\newcommand{\rlt}{\mathbin{\boldsymbol{<}}}
\newcommand{\rplus}{\mathbin{\boldsymbol{+}}}
\newcommand{\rminus}{\mathop{\boldsymbol{-}}}
\newcommand{\rmult}{\mathbin{\boldsymbol{\times}}}
\newcommand{\ilt}{\mathbin{<}}
\newcommand{\ieq}{\mathbin{=}}
\newcommand{\iplus}{\mathbin{+}}
\newcommand{\iminus}{\mathbin{-}}
\newcommand{\imult}{\mathbin{\times}}
\newcommand{\op}{\mathbin{\ast}} 
\newcommand{\iop}{\mathbin{\varoast}} 
\newcommand{\rop}{\mathbin{\boxast}} 
\newcommand{\cdisj}{\mathrel{\bar{\lor}}}
\newcommand{\emptyctx}{{\cdot}} 
\newcommand{\of}{{:}} 
\newcommand{\rwtypes}{\Vdash} 
\newcommand{\rotypes}{\vdash} 
\newcommand{\isEnv}{\;\mathsf{env}} 
\newcommand{\rulename}[1]{\textnormal{\textsc{#1}}}
\newcommand{\rref}[1]{\hyperlink{rule:#1}{\rulename{#1}}}
\definecolor{rulenameColor}{rgb}{0.5,0.5,0.5}
\newcommand{\inferenceRule}[3]{\inferrule*[lab={\hypertarget{rule:#1}{\rulename{\footnotesize\color{rulenameColor}#1}}}]{#2}{#3}}
\newcommand{\semtt}{\mathsf{tt}} 
\newcommand{\semff}{\mathsf{ff}} 
\newcommand{\semuu}{\star} 
\newcommand{\inclZ}[1]{\iota_{\IZ}(#1)} 
\newcommand{\sem}[1]{\llbracket #1 \rrbracket} 
\newcommand{\liftnoerr}[1]{{#1}_\bot} 
\newcommand{\Pstar}{\mathcal{P}_{\!\star}}
\newcommand{\dsup}[1]{{\textstyle\bigsqcup_{#1}}}
\newcommand{\PP}[1]{\Pstar(#1)} 
\newcommand{\PPleq}{\sqsubseteq} 
\newcommand{\pure}[1]{\{#1\}} 
\newcommand{\lift}[1]{#1^\dagger} 
\newcommand{\PPlet}[2]{\mathsf{let}\; #1 \shortleftarrow #2 \;\mathsf{in}\;} 
\newcommand{\PPletx}[1]{\mathsf{let}\; #1}
\newcommand{\PPinx}[1]{\shortleftarrow #1 \;\mathsf{in}\;} 
\newcommand{\PPtuple}[1]{\langle #1 \rangle_\star} 
\newcommand{\PPerr}{\emptyset} 
\newcommand{\PPbot}{\{\bot\}} 
\newcommand{\prt}{\mathsf{p}}
\newcommand{\tot}{\mathsf{t}}
\newcommand{\such}{\mid}
\newcommand{\fv}[1]{\mathsf{fv}(#1)} 
\newcommand{\rotrip}[5][\star]{#2 \rotypes \{ #3\} \, #4 \, \{ #5 \}^{#1}}
\newcommand{\rwtrip}[5][\star]{#2 \rwtypes \{ #3\} \, #4 \, \{ #5 \}^{#1}}
\newcommand{\mathleft}{\@fleqntrue\@mathmargin0pt}
\newcommand{\mathcenter}{\@fleqnfalse}
\title{An Imperative Language for Verified Exact Real-Number Computation}
\author{
Andrej Bauer${}^{1,2}$\thanks{This material is based upon work supported by the Air Force Office of Scientific Research under award number FA9550-21-1-0024.},
Sewon Park$^3$\thanks{A preliminary version of this work is included in the authors' doctoral dissertation~\cite{sewonphd}. 
This work was supported by the National Research Foundation of Korea (NRF) funded by the Korea government(MSIT) (No. 2016K1A3A7A03950702) and by JSPS KAKENHI (Grant-in-Aid for JSPS Fellows) JP22F22071.},
Alex Simpson${}^{1,2}$\thanks{This project has received funding from the  European Union's  Horizon  2020  research  
and innovation programme under the Marie Sk{\l}odowska-Curie grant  agreement No 731143.\newline
\includegraphics[scale=0.07]{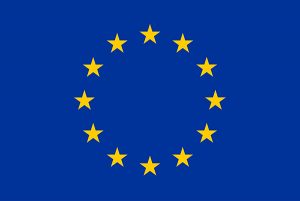}}\\
\small $^{1}$Faculty of Mathematics and Physics, University of Ljubljana -- Slovenia\\
\small $^{2}$Institute for Mathematics, Physics and Mechanics -- Slovenia\\
\small $^{3}$Graduate School of Informatics, Kyoto University -- Japan
} 
\date{}
\begin{document}

\maketitle

\begin{abstract}

We introduce \emph{Clerical}, a programming language for exact real-number computation that combines first-order imperative-style programming with
a limit operator for computation of real numbers as limits of Cauchy sequences.
We address the semidecidability of the linear ordering of the reals by incorporating nondeterministic guarded choice, through which decisions based on partial comparison operations on reals can be patched together to give total programs.
The interplay between mutable state, nondeterminism, and computation of limits is controlled by the requirement that expressions computing limits and guards modify only local state.
We devise a domain-theoretic denotational semantics that uses a variant of Plotkin powerdomain construction tailored to our specific version of nondeterminism.
We formulate a Hoare-style specification logic, show that it is sound for the denotational semantics,
and illustrate the setup by implementing and proving correct a program for computation of~$\pi$ as the least positive zero of~$\sin$. The modular character of Clerical allows us to compose the program from smaller parts, each of which is shown to be correct on its own.
We provide a proof-of-concept OCaml implementation of Clerical, and formally verify parts of the development, notably the soundness of specification logic, in the Coq proof assistant.
\end{abstract}

\bigskip

\noindent
\textbf{\textit{Keywords:}}
Verified exact real-number computation, Hoare-style specification logic, Programming language design, Denotational semantics, Nondeterminism

\clearpage

\tableofcontents

\section{Introduction}
In exact real-number computation, infinite representations are used to compute with real numbers precisely, without rounding errors.
By representing reals as, e.g., infinite sequences of rational approximations,
real-valued functions can be computed exactly, using stream algorithms or
type-2 Turing machines~\cite{w00}.
In many approaches to exact real-number computation \cite{tucker1999computation,tucker1999computation,tucker2015generalizing,escardo1996pcf,edalat2000integration,escardoSimpson2014,brausse2016semantics} the concrete representation of real numbers is veiled by an \emph{abstract datatype} or \emph{interface} that exposes only a suite of primitive operations on reals.
This way programmers can think of the real numbers in ordinary mathematical terms, as
a structure closely related to the usual one~\cite{hertling99:_real_number_struc_effec_categ,escardoSimpson2001}. Moreover, programs can be written and reasoned about intuitively, relying on familiarity with the traditional mathematics of real numbers, and without having to take rounding errors or representations into account.
This approach has been substantiated in practice by several implementations~\cite{muller2000irram,lambov2007,Ariadne,aern}. 

Imperative programming is a natural and ubiquitous programming paradigm, supported by a well-established precondition/postcondition-based program verification methodology~\cite{apt19:_fifty_years_hoare_logic}.
One would naturally like to incorporate exact real-number computation into the imperative programming style and its verification methodology. This desire has been previously addressed in \cite{brausse2016semantics}, in which the authors introduce a simple imperative language with an abstract data type of reals, provide formal rules for proving correctness assertions, and present illustrative examples, such as a verified root-finding program.

The goal of the present paper is to extend the work of \cite{brausse2016semantics} to a richer programming language, extended with a limit-finding primitive that calculates limits of Cauchy sequences, a central feature of exact real-number computation that makes it more expressive than purely algebraic or symbolic computation \cite{Bra03f,neumann2018topological}. The key operation takes a Cauchy sequence $f : \IN \to \IR$ (with a fixed rate of convergence) and returns its limit $\lim_{n \to \infty} f(n)$ as a value of real number type.
Such functionality is implemented in practice~\cite{muller2000irram}. However, it is not included in the imperative language of \cite{brausse2016semantics}, in which limits can only be implemented \emph{indirectly} as top-level programs that calculate approximations forming Cauchy sequences. Because the limit values are not themselves directly accessible, they cannot be returned or used in intermediate calculations, nor can calculations using limits be composed.
A similar restriction can be found in the work of Tucker and Zucker \cite{tucker1999computation,tucker2015generalizing}.

In this paper, we present an imperative language called 
\emph{Clerical} (Command-Like Expressions for Real Infinite-precision Calculations)  that
fully supports the construction of real numbers as limits, and we provide a program verification framework for it.

The paper is organized as follows.
In \Cref{sec:overview} we give an overview of the design challenges and the solutions employed in Clerical.
A formal account of the syntax and the type system is given in \Cref{sec:syntax}.
In \cref{sec:denotation} we introduce a modified Plotkin powerdomain and use it to define a denotational semantics of Clerical, including the relatively involved denotations of limits, guarded nondeterministic choice, and loops.
In \cref{sec:boolean-ops}, we show that Clerical is expressive enough to support parallel evaluation and nondeterministic operations, including McCarthy's ambiguous choice.
In \cref{sec:specification-logic} we define a specification logic and prove it to be sound for our denotational semantics.
The capabilities of our setup are illustrated in \cref{sec:example}, where we implement a Clerical program computing~$\pi$ as the least positive root of the $\sin$ function and show its correctness.
In \cref{sec:implementation} we briefly address the operational aspects of Clerical and how one might implement a practical programming language based on it. We provide our own proof-of-concept implementation, with several minor extensions to the language.
Finally, in \cref{sec:formalization} we comment on our formalization of soundness of specification logic and several other parts of Clerical in the proof assistant Coq~\cite{coq}.

\section{Overview of Clerical}
\label{sec:overview}

Clerical is an imperative-style programming language with mutable variables, conditional statements, loops, equipped with an abstract datatype of real numbers that supports basic arithmetic, and crucially, computation of real numbers as limits of Cauchy sequences. As the strict linear order~$<$ on the reals is only semidecidable, conditional statements need to cope with possibly nonterminating comparison tests, leading to a necessarily nondeterministic language.  
The language is designed to exploit the programming potential of interactions between mutable state, computation of limits, nondeterminism, and nontermination. Its type system and specification logic can be used to ensure that such interactions are only utilised in error-free ways.

The limit-finding operator $\lim : (\IN \to \IR) \to \IR$ could be defined naturally as a higher-order function taking a Cauchy sequence $f: \IN \to \IR$ to its limits; see \cite{park23} for an example. 
Instead, we adopt a more general approach that does not require  function types (but which is equivalent in their presence). 
We introduce a custom expression $\clim{n}{e}$ that encapsulates a real-valued expression~$e$ (usually containing $n$) computing approximations of the limit.
When the sequence of approximations converges rapidly, i.e., the $n$-th one is within $2^{-n}$ of the limit, the expression $\clim{n}{e}$ is guaranteed to evaluate to the real number that is the limit of the sequence.
For the language to be rich enough to express interesting limits, it is important that the calculation of~$e$ can be an arbitrarily complex computation that may use the full set of programming features, including sequencing, loops and mutable state. In this sense, expressions in  Clerical are \emph{command-like}, 
going far beyond the simple algebraic expressions included in the imperative languages typically used as the basis of Hoare logic~\cite{apt19:_fifty_years_hoare_logic}. Nevertheless,  
as in such standard imperative languages, we do require the 
expression~$e$ 
in $\clim{n}{e}$ to be \emph{pure}, meaning that it does not alter the global state. To ensure this, $e$ is  allowed 
to perform assignment only to its own \emph{local variables}, with all other variables accessed in a read-only fashion.
This purity condition is enforced via a typing discipline, as explained in~\cref{sec:syntax}.

Non-termination needs careful handling in exact real-number computation. 
Even a simple order comparison of two real numbers can give rise to non-termination when the numbers coincide \cite[Theorem 4.1.16]{w00}. 
That is, the test $x \rlt y$ does not terminate when $x$ and $y$ are equal.
To safely deal with real number comparisons that may diverge, nondeterminism becomes essential \cite{LUCKHARDT1977321}. 
In Clerical, nondeterminism is provided by a Dijkstra-style guarded case statement:
\[\ccase e_1  \Rightarrow c_1 \mid e_2 \Rightarrow c_2 \mid \cdots \mid e_n \Rightarrow c_n \cend\;.\]
This construct proceeds with parallel or interleaved evaluation of the guards $e_i$,
and selects one of the guarded expressions~$c_i$ whose guard~$e_i$ evaluated to~$\ctrue$.
If several guards are true, any one of the corresponding expressions may be selected nondeterministically,
which allows non-termination to be bypassed: even if one of the guards fails to terminate,
as long as there is some guarded expression that can be selected, the case expression safely selects such a branch.
For example, soft comparison \cite{BRATTKA1998490}, a nondeterministic approximation to 
order comparison, is expressed in Clerical as follows:
\[
\ccase x \rlt y \rplus 2^{-n}  \Rightarrow \ctrue \mid y \rlt x \rplus 2^{-n} \Rightarrow \cfalse \cend\;.
\]
In the case $x = y + 2^{-n}$, the first guard fails to terminate, as does the second in the case $x = y - 2^{-n}$.
The overall expression evaluates deterministically to $\ctrue$ in the case that $x \leq  y - 2^{-n}$, to 
$\cfalse$ in the case that $x \geq  y + 2^{-n}$, and nondeterministically to either truth value in the case
$y - 2^{-n} < x < y + 2^{-n}$.

The limit operator adds a further complication to Clerical.
It is not clear how to devise a coherent evaluation strategy for $\clim{n}{e}$ when the expression~$e$ fails to generate a rapidly converging Cauchy sequence.
For this reason, we consider any program that applies the limit operation to an expression $e$ that does not define a rapidly converging Cauchy sequence as being erroneous.
In the denotational semantics of Clerical, such programs will assume a special error value, which we must be able to combine semantically with non-termination and nondeterminism. For this purpose we 
use  
a modified version of the Plotkin powerdomain~\cite{plotkin1976powerdomain} in \cref{sec:denotation}.
We remark that the error value cannot be identified with non-termination, because, unlike non-termination, error values cannot be bypassed by guarded case statements.

A general (possibly impure) expression in  Clerical has two main behaviours: it evaluates to a value and also, in the case that 
it is impure, alters the state. 
Accordingly, we use the following form of Hoare-style triples, whose postconditions can refer to return values as in \cite{hot,HONDA201475,iris},
for partial and total correctness specifications for expressions:
\[
\{ \phi\} \, e\, \{ y \of \tau \such \psi \}^\tot
    \qquad\text{and}\qquad
\{ \phi\} \, e\, \{ y \of \tau \such \psi \}^\prt\;.
\]
The first triple expresses total correctness of the expression $e$ and says that,
in any state satisfying the precondition $\phi$, every nondeterministic branch in the execution of $e$ terminates without introducing any error, and each branch results in a state and a value $y$ satisfying the postcondition $\psi$.
The second triple for partial correctness still requires terminating nondeterministic branches to satisfy $\psi$ and all branches
to be  error free, but
permits the existence of nondeterministic branches that do not terminate.
Although total correctness is the form of correctness that is usually desired in program verification, it turns out  to be necessary to also consider partial correctness in order to provide the correct proof rules for total correctness.
When we prove the total correctness of a guarded case expression, it is appropriate to assume that the guards are only partially correct,
since the case expression can still  terminate when some guards do not.
Conversely, total correctness is required to formulate the correct proof rules for partial correctness.
To prove a limit operation $\clim{n}{e}$ partially correct, it is still necessary to ensure that $e$ is totally correct, as any non-terminating behaviour in $e$ will prevent it from defining a rapid Cauchy sequence, and hence $\clim{n}{e}$ to be erroneous, as discussed above. 
Accordingly, we provide proof rules for partial and total correctness that are intertwined and prove that they are sound with regard to the denotational semantics.


\section{Syntax and type system}
\label{sec:syntax}

As just discussed, Clerical is an imperative language for exact real-number computation based on \emph{command-like expressions}, that is, value-returning expressions built up using the usual constructs  for forming imperative commands: while loops, case statements, sequential composition, variable assignment, etc. Variables and values are typed. The types of Clerical are:
a  \text{unit} type, $\dU$, with only one value; booleans, $\dB$; integers, $\dZ$; and the abstract type of real numbers, $\dR$, whose 
very presence is the \emph{raison d'\^{e}tre} of the language. 

The well-typedness of expressions is ensured as usual by a type system that provides rules for 
establishing that an expression $e$ has type $\tau$ in a context that assigns types to  all the variables that appear in $e$.
There is, however, one particular subtlety that we need to deal with. As adumbrated in Section~\ref{sec:overview}, the programming language distinguishes between `pure' and general (`impure') expressions. By a \emph{pure} expression $e$, we understand one that can read from but not write to the variables in the context of the expression.  That is, $e$ must 
not contain within it any assignment operation $\clet{x}{e'}$ to a variable $x$ in the context of $e$. 
Since, in imperative programming, variable assigment is an essential component of almost any non-trivial computation, in order to permit a sufficiently expressive language of pure expressions in Clerical, we allow pure expressions to declare and assign values to 
\emph{local variables} that do not appear in the context of $e$. This is formalized via  a type system in which the general form of judgement for assigning an expression $c$ the type $\tau$ is 
\[
  \Gamma; \Delta \rwtypes c : \tau  \enspace , 
\]
where $\Gamma = (x_1 \of \tau_1, \ldots, x_m \of \tau_m)$ is a list assigning types to variables that are read-only in~$c$,
and $\Delta = (y_1 \of \sigma_1, \ldots, y_n \of \sigma_n)$ assigns types to read-write variables. 
Pure expressions arise by requiring the context $\Delta$ of read-write variables to be empty. For convenience in defining the semantics of 
Clerical in~\cref{sec:denotation},
we include a separate judgement form for pure expressions:
\[
  \Gamma \rotypes e : \tau  \enspace , 
\]
in which $\Gamma$, as above,   is a context of read-only variables. In general, we refer to sequences $\Gamma$ as 
\emph{read-only contexts}, and pairs
\begin{equation*}
 \Gamma ; \Delta =
 (x_1 \of \tau_1, \ldots, x_m \of \tau_m ;\; y_1 \of \sigma_1, \ldots, y_n \of \sigma_n)
\end{equation*}
as \emph{read-write contexts}. The two judgement forms are connected by conversion  rules in each direction:
\[
   \inferenceRule{}{
      \Gamma ; \emptyctx \rwtypes c : \tau
    }{
      \Gamma \rotypes c : \tau
    }
\qquad
    \inferenceRule{}{
      \Gamma, \Delta \rotypes e : \tau
    }{
      \Gamma ; \Delta \rwtypes e : \tau
    }
\]
The first explicitly recognises the purity of a general expression with an empty sequence of write variables. The second allows us to use a pure expression
as a general expression, by choosing any partition of the \emph{read-only} context $\Gamma,\Delta$ to identify the variables
$\Delta$ that we allow the general expression to write to. 
The distinction between~$c$ as the meta-notation for general expressions and~$e$ as the meta-notation for pure expressions reflects the fact that, in Clerical, we use general expressions~$c$ in syntactic positions in which `commands' usually appear in traditional imperative languages, and we use pure expressions~$e$ in positions that are traditionally restricted to  `expressions' in imperative language nomenclature. Indeed, our general value-returning expressions generalise the traditional value-free `commands' since we can consider the latter as being given by
general expressions of unit type in read-write contexts in which all variables are writable:
\[
  \emptyctx  \, ; \Delta \rwtypes c : \dU \enspace . 
\]
Further mediation between the collections $\Gamma$ of read-only variables and $\Delta$ of read-write variables is provided by the typing rule for local variables
\[
 \inferenceRule{}{
      \Gamma, \Delta \rotypes e : \sigma \\
      \Gamma ; \Delta, x \of \sigma \rwtypes c : \tau
     }{
      \Gamma ; \Delta \rwtypes (\cvar{x}{e} c) : \tau
    }
\]
The expression $\cvar{x}{e} c$ declares a fresh local variable $x$ and initializes it by computing the value of the subexpression~$e$, which is required to be pure. (This is an example of a syntactic position that would be occupied by an `expression' in a traditional imperative language, and which accordingly  requires a pure expression in Clerical.) The main expression then continues computation as the general expression~$c$, which is at liberty to overwrite the variable $x$ without affecting the status (read-only or read-write respectively) of the variables in~$\Gamma$ and~$\Delta$.

\begin{figure}
  \centering
  \begin{mdframed}
  \small
  \begin{align*}
  \text{Expression}\ e, c
  \bnfis& x                               &&\text{variable}\\
  \bnfor& \ctrue \bnfor \cfalse \bnfor \numeral{k} \bnfor \cskip
                                          &&\text{constants}\\
  \bnfor& \ccoerce{e}                     &&\text{coercions from $\dZ$}\\ 
  & &&\quad \text{to $\dR$}\\
  \bnfor& 2^e                             &&\text{exponentiation by $2$}\\
  \bnfor& e_1 \iop e_2                    &&\text{integer arithmetic} \\ 
  & &&\quad\text{${\iop} \in \{{\iplus}, {\iminus}, {\imult}\}$}\\
  \bnfor& e_1 \rop e_2 \bnfor \cinv{e}    &&\text{real arithmetic} \\
  & &&\quad\text{${\rop} \in \{{\rplus}, {\rminus}, {\rmult}\}$} \\
  \bnfor& e_1 \ilt e_2 \bnfor e_1 = e_2   &&\text{integer comparison}\\
  \bnfor& e_1 \rlt e_2                    &&\text{real comparison}\\
  \bnfor& \clim{x}{e}                     &&\text{limit ($x$ bound in $e$)} \\
  \bnfor& c_1 ; c_2                       &&\text{sequencing}\\
  \bnfor& \cvar{x}{e} c                   &&\text{local variable}\\
   & && \quad\text{($x$ bound in $c$)}\\
  \bnfor& \clet{x}{e}                     &&\text{assignment}\\
  \bnfor& \cif e \cthen c_1 \celse c_2 \cend
                                          &&\text{conditional}\\
  \bnfor& \ccase e_1 \To c_1 \mid \cdots \mid e_n \To c_n \cend
                                          &&\text{guarded cases}\\
  \bnfor& \cwhile e \cdo c \cend          &&\text{loop} \\[1ex]
  \text{Type}\ \tau, \sigma
  \bnfis& \dU                             &&\text{unit}\\
  \bnfor& \dB                             &&\text{boolean}\\
  \bnfor& \dZ                             &&\text{integer}\\
  \bnfor& \dR                             &&\text{real} \\[1ex]
  \text{Typing context}\ \Gamma, \Delta
  \bnfis& \ x_1 \of \tau_1, \ldots, x_m \of \tau_m \\
  \text{Read-only context}
  \bnfis& \ \Gamma \\
  \text{Read-write context}
  \bnfis& \ \Gamma; \Delta
  \end{align*}
  \end{mdframed}
  \caption{Abstract syntax}
  \label{fig:syntax}
\end{figure}

\begin{figure}[htbp]
  \centering
  \begin{mdframed}
  \begin{mathpar}
    \inferenceRule{Ty-Rw-Ro}{
      \Gamma ; \emptyctx \rwtypes c : \tau
    }{
      \Gamma \rotypes c : \tau
    }

    \inferenceRule{Ty-Ro-Rw}{
      \Gamma, \Delta \rotypes e : \tau
    }{
      \Gamma ; \Delta \rwtypes e : \tau
    }
    
    \inferenceRule{Ty-Var}{
      \Gamma(x) = \tau
    }{
      \Gamma \rotypes x : \tau
    }

    \inferenceRule{Ty-False}{
    }{
      \Gamma \rotypes \cfalse : \dB
    }

    \inferenceRule{Ty-True}{
    }{
      \Gamma \rotypes \ctrue : \dB
    }

    \inferenceRule{Ty-Int}{
    }{
      \Gamma \rotypes \numeral{k} : \dZ
    }

    \inferenceRule{Ty-Skip}{
     }{
      \Gamma \rotypes \cskip : \dU
    }

    \inferenceRule{Ty-Coerce}{
      \Gamma \rotypes e : \dZ
    }{
      \Gamma \rotypes \ccoerce{e} : \dR
    }

    \inferenceRule{Ty-Exp}{
      \Gamma \rotypes e : \dZ
    }{
      \Gamma \rotypes 2^e : \dR
    }

    \inferenceRule{Ty-Int-Op}{
      \Gamma \rotypes e_1 : \dZ \\
      \Gamma \rotypes e_2 : \dZ
    }{
      \Gamma \rotypes e_1 \iop e_2 : \dZ
    }

    \inferenceRule{Ty-Real-Op}{
      \Gamma \rotypes e_1 : \dR \\
      \Gamma \rotypes e_2 : \dR
    }{
      \Gamma \rotypes e_1 \rop e_2 : \dR
    }

    \inferenceRule{Ty-Recip}{
      \Gamma \rotypes e : \dR
    }{
      \Gamma \rotypes \cinv{e} : \dR
    }

    \inferenceRule{Ty-Int-Lt}{
      \Gamma \rotypes e_1 : \dZ \\
      \Gamma \rotypes e_2 : \dZ
    }{
      \Gamma \rotypes e_1 < e_2 : \dB
    }

    \inferenceRule{Ty-Int-Eq}{
      \Gamma \rotypes e_1 : \dZ \\
      \Gamma \rotypes e_2 : \dZ
    }{
      \Gamma \rotypes e_1 = e_2 : \dB
    }

    \inferenceRule{Ty-Real-Lt}{
      \Gamma \rotypes e_1 : \dR \\
      \Gamma \rotypes e_2 : \dR
    }{
      \Gamma \rotypes e_1 \rlt e_2 : \dB
    }

    \inferenceRule{Ty-Lim}{
      \Gamma,x \of \dZ \rotypes e : \dR
    }{
      \Gamma \rotypes (\clim{x}{e}) : \dR
    }

    \inferenceRule{Ty-Sequence}{
      \Gamma ; \Delta \rwtypes c_1 : \dU \\
      \Gamma ; \Delta \rwtypes c_2 : \tau
     }{
      \Gamma ; \Delta \rwtypes (c_1 ; c_2) : \tau
    }

    \inferenceRule{Ty-New-Var}{
      \Gamma, \Delta \rotypes e : \sigma \\
      \Gamma ; \Delta, x \of \sigma \rwtypes c : \tau
     }{
      \Gamma ; \Delta \rwtypes (\cvar{x}{e} c) : \tau
    }

    \inferenceRule{Ty-Assign}{
      \Delta \rotypes x : \tau \\
      \Gamma, \Delta \rotypes e : \tau
     }{
      \Gamma ; \Delta \rwtypes (\clet{x}{e}) : \dU
    }

    \inferenceRule{Ty-Cond}{
      \Gamma, \Delta \rotypes e : \dB \\
      \Gamma ; \Delta \rwtypes c_1 : \tau \\
      \Gamma ; \Delta \rwtypes c_2 : \tau
    }{
      \Gamma ; \Delta \rwtypes (\cif e \cthen c_1 \celse c_2 \cend) : \tau
    }

    \inferenceRule{Ty-Case}{
      \Gamma, \Delta \rotypes e_i : \dB \\
      \Gamma ; \Delta \rwtypes c_i : \tau \\
      (i = 1, \ldots, n)
    }{
      \Gamma ; \Delta \rwtypes (\ccase e_1 \To c_1 \mid \cdots \mid e_n \To c_n \cend) : \tau
    }

    \inferenceRule{Ty-While}{
      \Gamma, \Delta \rotypes e : \dB \\
      \Gamma ; \Delta \rwtypes c : \dU
    }{
      \Gamma ; \Delta \rwtypes (\cwhile e \cdo c \cend) : \dU
    }
  \end{mathpar}
  \end{mdframed}
  \caption{Typing rules}
  \label{fig:typing-rules}
\end{figure}

The abstract syntax of expressions is shown in \cref{fig:syntax}, where we indicate expressions with~$e$ and $c$ according to whether they are intended to be pure or impure. However, as discussed above, the distinction between purity and impurity is in actuality implemented by the 
type system, and the two kinds of expression share the same grammar for their abstract syntax. 

Among the pure expressions are variables~$x$, of which we presume to have an unbounded supply, boolean constants $\cfalse$ and $\ctrue$, integer numerals~$\numeral{k}$ for $k \in \IZ$, the trivial expression~$\cskip$, coercion from integers to reals $\ccoerce{e}$, exponentiation $2^e$, arithmetical operations, integer comparisons $e_1 = e_2$ and $e_1 < e_2$, real comparison $e_1 \rlt e_2$ (but not equality of reals), and the limit expression $\clim{x}{e}$.
The (potentially) state-changing expressions are sequencing $c_1; c_2$, introduction of a local variable $\cvar{x}{e} c$, variable assignment $\clet{x}{e}$, the conditional $\cif e \cthen c_1 \celse c_2 \cend$, guarded case $\ccase e_1 \To c_1 \mid \cdots \mid e_n \To c_n \cend$, and the loop $\cwhile e \cdo c \cend$. These syntactic constructions can also give rise to pure expressions  in certain circumstances dictated by the type system, which is presented in  \cref{fig:typing-rules}. 

Since the focus of Clerical is real-number computation, we briefly discuss some relevant aspects of the language and give an example program. A fully  detailed semantic explanation of the language follows in~\cref{sec:denotation}, and many more  examples appear in \cref{sec:boolean-ops,sec:example}.

The abstract datatype $\dR$ of real numbers, includes constants $\ccoerce{m}$ for every integer $m$  coerced to a real number via
the inclusion $\mathbb{Z} \subseteq \mathbb{R}$, and $2^m$ for an integer-exponent power of $2$, which of course includes fractional values when $m$ is negative. The latter construct is particularly useful for implementing bounds related to the limit operation $\clim{x}{e}$,
which assumes that $e$ defines a \emph{rapidly converging} Cauchy sequence in its integer variable $x$: namely
$(e_m)_{m \in \mathbb{Z}}$ is a Cauchy sequence whose limit $l = \lim_{m \to \infty} e_m$ satisfies
$|\,l-e_m| < 2^{-m}$, for all $m\in \mathbb{Z}$,
where we write $e_m$ for the real-number
computed by $e$ when $x$ takes value $m$. (Note that, for simplicity, and with no loss of generality, our sequences are indexed by all integers including negative indices.)
As built-in arithmetic operations on the reals, we include addition, subtraction, multiplication and reciprocal. The strict comparison operator
$e_1 \rlt e_2$ is boolean-valued, and returns the expected truth value whenever $e_1$ and $e_2$ compute to distinct reals. 
In the case of equality, however, it diverges. As discussed in Section~\ref{sec:overview}, such behaviour is an unavoidable feature of
exact-real-number computation. We do not include an equality test on reals since, for essentially the same reason, the best one could achieve is divergence in the case of equality, meaning an equality test would never return true. 

The behaviour of the guarded case construct and its relationship to such non-termination properties has been discussed in Section~\ref{sec:overview}. To end this section, we present  a short example program that demonstrates how this case construct interacts with the limit operator to define mathematically useful operations; albeit, in this case, a particularly simple one. The program below is a pure expression of type $\dR$, with a single read-only variable $x$ also of type $\dR$.  It calculates the absolute value of the real number assigned to the variable $x$. Because of the non-termination properties of exact-real-number comparison, this cannot be defined using a simple if-then-else conditional that tests whether the value of $x$ is negative or not. Instead, we need to combine the limit operator with the guarded case construct:
\begin{lstlisting}
lim n.
 case
    $x \rlt 2^{\iminus n\iminus  \numeral{1}}$ =>  $~\rminus x$ 
  | $\rminus 2^{\iminus n\iminus \numeral{1}} \rlt x$ => $~x$
 end
\end{lstlisting}
Here $\rminus e$ is an abbreviation for $\ccoerce{\numeral{0}} \rminus e$.
Given a real number $x$, to approximate its absolute value $|x|$ up to accuracy $2^{-n}$, for any integer $n$, we make a parallel test if $x < 2^{-n-1}$ holds or $x > -2^{-n-1}$ holds;  at least one of which is guaranteed to evaluate to true. 
If the first condition does, we return $-x$ as the $2^{-n}$ approximating value.
If the second condition does, we return $x$. 
In both cases, the returned value lies within $2^{-n}$ of $|x|$. Thus 
the sequence, as $n$ tends to $\infty$, has the required rapid rate of convergence to the limit value  $|x|$.

One slightly subtle point with the above example, is that the `sequence' defined by the case statement is actually nondeterministic. 
In the case that $-2^{-n-1} < x < 2^{-n-1}$, either of the values $x$ or $-x$ may arise as the $2^{-n}$ approximation. In the semantics of Clerical, we do not try to determinize this. In principle, two successive evaluations of the  $2^{-n}$ approximation, may yield two different answers. In any case, the limit of the sequence is uniquely determined. 
In the next section, 
the semantics of the limit operator defines it in general for nondeterministic sequences.

\subsection{First-order functions}
\label{sec:first-order-func}

In this section, we extend the core Clerical language presented above with first-order functions. This extension does not change the expressive power of Clerical, since  all uses functions in programs can be eliminated by inlining their definitions. Nevertheless, functions play an essential role in practical programming, since they allow  compositional programming, providing a convenient way of breaking up code into smaller, reusable pieces.

In the extended Clerical, a function is simply an abbreviation of an expression that depends on parameters.
Functions are defined using the syntax
\begin{equation*}
\cfunction f {x_1 \of \tau_1, \ldots, x_n \of \tau_n} {\sigma} {e}.
\end{equation*}
Informally, $f$ takes $n$ parameters of types $\tau_1, \ldots, \tau_n$ are returns a value of type~$\sigma$, computed by the function body~$e$.
It is a \defemph{first-order function} because the types $\tau_1, \ldots, \tau_n$ and $\sigma$ are just the primitive Clerical types, as opposed to function types (that do not exist in Clerical anyhow).

Functions are used through \defemph{function call} expressions:
\begin{equation*}
  \text{Expression}\ e, c
  \bnfis \cdots
  \bnfor f(e_1, \ldots, e_n)
  \bnfor \cdots
\end{equation*}
To keep track of the defined functions, we introduce a \defemph{top-level environment}
\begin{align*}
  T =
    [&\cfunction {f_1} {x_1 \of \tau_{1,1}, \ldots, x_{n_1} \of \tau_{1,n_1}} {\sigma_1} {e_1}), \\
     &\qquad\vdots\\
     &\cfunction {f_k} {x_1 \of \tau_{k,1}, \ldots, x_{n_1} \of \tau_{k,n_k}} {\sigma_k} {e_k})],
\end{align*}
which is just a list of function definitions.
Typing judgements are extended with~$T$ (it would suffice to keep just the signatures of the defined functions, without their bodies):
\begin{equation*}
  T ; \Gamma \rotypes e : \tau
  \qquad\text{and}\qquad
  T ; \Gamma ; \Delta \rwtypes e : \tau
\end{equation*}
The previously given inference rules just pass~$T$ from conclusions to premises. There is one new inference rule governing function calls:
\begin{equation*}
  \inferenceRule{Ty-Ro-Call}{
    (\cfunction f {x_1 \of \tau_1, \ldots, x_n \of \tau_n} {\sigma} {e}) \in T \\\\
    T ; \Gamma \rotypes e_i : \tau_i \quad \text{for $i = 1, \ldots, n$}
  }{
    T ; \Gamma \rotypes f(e_1, \ldots, e_n) : \sigma
  }
\end{equation*}
Note that function calls are pure expressions.

We also need a custom judgement for checking top-level environments:
\begin{equation*}
  \inferenceRule{Env-Empty}{ }{[\,] \isEnv}
  \qquad\qquad
  \inferenceRule{Env-Extend}{
    T \isEnv
    \\
    f \not\in T
    \\
    T ; x_1 \of \tau_1, \ldots, x_n \of \tau_n \rotypes e : \sigma
  }{
    [T, (\cfunction f {x_1 \of \tau_1, \ldots, x_n \of \tau_n} {\sigma} {e})] \isEnv
  }
\end{equation*}
The rule \rref{Env-Empty} validates the empty environment, while \rref{Env-Extend} extends an environment with a function definition, so long as the function body is a pure expression of return type when given read-only access to the arguments.
Note that each function may call previously defined functions, but may not call itself. We discuss the lack of recursion in~\cref{sec:future-work}.

\section{Denotational semantics}
\label{sec:denotation}

In this section we assign mathematical meaning to the constituent parts of Clerical.
The types are interpreted by  the expected sets:
\begin{align*}
\sem{\dZ} &\defeq \IZ &
\sem{\dB} &\defeq \{\semff, \semtt\} &
\sem{\dR} &\defeq \IR &
\sem{\dU} &\defeq \{\star\} \enspace .
\end{align*}
Typing contexts are interpreted by cartesian product:
\begin{align*}
  \sem{x_1 \of \tau_1, \ldots, x_m \of \tau_m} &\defeq
  \sem{\tau_1} \times \cdots \times \sem{\tau_m} \enspace .
\end{align*}
The denotation $\sem{\Gamma}$ of a read-only context $\Gamma$ is thought of as an \defemph{environment} specifying values of variables, whereas the denotation of a read-write context $\Gamma ; \Delta$ has two components, the environment $\sem{\Gamma}$ and the \defemph{state} $\sem{\Delta}$.

The meaning of a well-typed pure expression $\Gamma \rotypes e : \tau$ and
general expression $\Gamma; \Delta \rwtypes c : \tau$ 
 will be maps:
\begin{align*}
  \sem{\Gamma  \rotypes e : \tau}  & : \sem{\Gamma} \to \PP{\sem{\tau}}, \\
  \sem{\Gamma; \Delta \rwtypes c : \tau}  & : \sem{\Gamma} \to (\sem{\Delta} \to \PP{\sem{\Delta} \times \sem{\tau}}),
\end{align*}
where $\PP{S}$ is  a \defemph{powerdomain}, a collection of sets representing the possible sets of outcomes of nondeterministic computations that return values from $S$. Due to the distinction between error and non-termination, already mentioned in Section~\ref{sec:overview}, 
Clerical requires a rather specific form of powerdomain. In order to motivate it, we discuss the distinction between
error and non-termination in more detail.

Mathematically, we would like to  consider a well-behaved (deterministic) expression $e$ of type $\dR$ as defining a real number value $r$. Computationally, however, the best that $e$ will be able to do is to produce, on demand, an approximation of $r$ to within any specified precision. In the ideal case, given precision $\epsilon > 0$, the evaluation of $e$ will determine in finite time some rational approximation $q$ such that $|q - r| < \epsilon$. In Clerical, not every  deterministic expression of type $\dR$ achieves this ideal.

Some expressions simply give rise to non-terminating computations that never provide any approximating information. Others, may appear to provide approximating information, but do so in a way that is either incomplete or inconsistent.\footnote{Incompleteness may  arise, for example, if approximating values are only computed for $\epsilon$ that are not too small. Similarly, inconsistency can occur if two different $\epsilon_1, \epsilon_2$ result in putative approximations $q_1,q_2$ with $|q_1 - q_2| \geq \epsilon_1 + \epsilon_2$.} Our semantics of Clerical distinguishes between these two eventualities. Expressions that produce incomplete or inconsistent approximating information are considered \defemph{erroneous}, and the semantics for Clerical will ensure that no such expression is ever executed within a program with valid semantics. The motivation for this is to avoid any situation in which faulty approximations can provide misleading information. In contrast, \defemph{non-terminating} expressions are considered harmless in the sense that they cannot be a source of incorrect information. As is standard in denotational semantics, such expressions are assigned the special denotation $\bot$. 
These need to be distinguished from erroneous ones, because, as discussed in Section~\ref{sec:overview},  non-terminating expressions 
have an essential role to play  when programming in Clerical.

For any set~$S$,  define $\liftnoerr{S} \defeq S + \{\bot\}$, where~$\bot$ represents non-termination, as discussed above. 
Although $S + \{\bot\}$ is strictly speaking a coproduct (sum) of two sets, in practice we shall only use it
in instances in which $\bot \notin S$. This allows us to represent $\liftnoerr{S}$ as the (disjoint) union 
$S \cup \{\bot\}$.
Define:
\begin{equation*}
  \PP{S} \defeq
  \{ X \subseteq \liftnoerr{S} \such
       \text{$X$ infinite} \Rightarrow \bot \in X
  \},
\end{equation*}
A set $X \in  \PP{S}$ represents the results of a nondeterministic Clerical computation in the following way. Firstly, $X = \emptyset$ in the case that the computation is \emph{erroneous} (see the discussion above), in which case no result value is relevant. 
The case $\bot \in X$ applies if the nondeterministic computation has at least one non-terminating branch, in which case $X \setminus \{\bot\}$ is the set of all result values returned by terminating nondeterministic branches. If instead $\bot \notin X$ then $X$ represents the set of possible results of a necessarily terminating nondeterministic computation. Since Clerical has only  finite nondeterministic branching, such a set is necessarily finite.

We shall implicitly make use of  the fact that powerdomain $\Pstar$ carries the structure of a monad on the category of sets. For the purposes of this paper, we never need to directly refer to the abstract structure of the monad. However, the following maps associated with this structure will be useful. Firstly, for any $S$, there is a map $x \mapsto \pure{x} : S \to \PP{S}$, that maps any $x \in S$ to the singleton $\pure{x}$ representing the deterministic computation that returns $x$ as its result. 
Secondly, for any function $f : S \to  \PP{T}$ we define $\lift{f} : \PP{S} \to  \PP{T}$ by:
\[
\lift{f}(X) = 
\begin{cases} 
\emptyset & \text{if $\some{x \in X}\ f(x)= \emptyset$,}\\
\textstyle
\{\bot \such \bot \in X\} \cup \bigcup_{x \in X \setminus \{\bot\}} f(x) & \text{otherwise.}
\end{cases}
\]
It is easily checked that this indeed defines a set in $\PP{T}$. The idea behind the definition is that 
$\lift{f}(X)$ models a sequencing of nondeterministic computations: first execute the nondeterministic computation whose result is represented by $X$, then for each potential value $x \in X$
run the nondeterministic computation modelled by $f(x)$ to obtain potential return values.
This idea motivates the alternative notation
\[
\PPlet{x}{X}\, f(x)
\]
for  $\lift{f}(X)$, which we shall often use.

The reason behind the first clause in the definition of 
$\lift{f}(X)$ is that a computation is considered illegitimate if an error occurs along any possible nondeterministic branch. In such a case the entire computation is given the error denotation $\emptyset$.

\begin{figure}
  \begin{mdframed}
  \centering
  \small
\begin{align*}
  \sem{\Gamma \rotypes c : \tau} \, \gamma = {}&
     \PPlet
        {(v, ())}
        {\sem{\Gamma; \emptyctx \rwtypes c : \tau} \, \gamma \, ()}
        {\pure{v}}
  \\
  \sem{x_1 \of \tau_1; \ldots, x_n \of \tau_n \rotypes x_i : \tau_i} \, \gamma
  = {}& \pure{\gamma_i}
  \\
  \sem{\Gamma \rotypes \cfalse : \dB} \, \gamma = {}& \pure{\semff} \\
  \sem{\Gamma \rotypes \ctrue : \dB} \, \gamma = {}& \pure{\semtt} \\
  \sem{\Gamma \rotypes \numeral{k} : \dZ} \, \gamma = {}& \pure{k} \\
  \sem{\Gamma \rotypes \cskip : \dU} \, \gamma = {}& \pure{\semuu} \\
  \sem{\Gamma \rotypes \ccoerce{e} : \dR} \, \gamma = {}& \PPlet
        {z}
        {\sem{\Gamma \rotypes e : \dZ}\,\gamma}
        {\pure{\inclZ{z}}}
  \\
  \sem{\Gamma \rotypes e_1 \iop e_2 : \dZ} \, \gamma = {}&
    \PPlet
    {x}
    {\sem{\Gamma \rotypes e_1 : \dZ}\,\gamma} 
 \\
 & \PPlet
    {y}
    {\sem{\Gamma \rotypes e_2 : \dZ}\,\gamma}
    \pure{x \op y}
 \\
 \sem{\Gamma \rotypes e_1 \rop e_2 : \dR} \, \gamma = {}& 
   \PPlet
    {x}
    {\sem{\Gamma \rotypes e_1 : \dR}\,\gamma}
\\
& \PPlet
    {y}
    {\sem{\Gamma \rotypes e_2 : \dR}\,\gamma}
    \pure{x \op y}
 \\
 \sem{\Gamma \rotypes \cinv{e} : \dR} \, \gamma = {}&
 \PPlet
    {x}
    {\sem{\Gamma \rotypes e : \dR}\,\gamma}
    \begin{cases} 
    \pure{x^{-1}} & \text{if $x \neq 0$} \\
    \PPbot  & \text{if $x=0$}
    \end{cases}
 \\
  \sem{\Gamma \rotypes e_1 = e_2 : \dZ} \, \gamma = {}&
    \PPlet
    {x}
    {\sem{\Gamma \rotypes e_1 : \dZ}\,\gamma}
\\
& 
\PPlet
    {y}
    {\sem{\Gamma \rotypes e_2 : \dZ}\,\gamma}
    \begin{cases} 
    \pure{\semtt} & \text{if $x = y$} \\
    \pure{\semff} & \text{if $x \neq y$}
    \end{cases}
\\
  \sem{\Gamma \rotypes e_1 < e_2 : \dZ} \, \gamma = {}&
    \PPlet
    {x}
    {\sem{\Gamma \rotypes e_1 : \dZ}\,\gamma}
\\
& \PPlet
    {y}
    {\sem{\Gamma \rotypes e_2 : \dZ}\,\gamma}
    \begin{cases} 
    \pure{\semtt} & \text{if $x < y$} \\
    \pure{\semff} & \text{if $x \geq y$}
    \end{cases}
\\
   \sem{\Gamma \rotypes e_1 \rlt e_2 : \dR} \, \gamma = {}&
    \PPlet
    {x}
    {\sem{\Gamma \rotypes e_1 : \dR}\,\gamma} \\
    &\PPlet
    {y}
    {\sem{\Gamma \rotypes e_2 : \dR}\,\gamma}
    \begin{cases}
    \pure{\semtt} & \text{if $x < y$} \\
    \pure{\semff} & \text{if $x > y$} \\
    \PPbot &  \text{if $x = y$} 
    \end{cases}
  \\
  \sem{\Gamma \rotypes (\clim{x}{e}) : \dR}\,\gamma = {}&
  \begin{cases}
    \pure{t} &
      \begin{aligned}[t]
      &\text{if $t \in \IR$ and $\all{k \in \IZ}$} \\
      &\text{  $\sem{\Gamma, x \of \dZ \rotypes e : \dR} (\gamma, k) \subseteq \IR$ and} \\
      &\text{  $\all{u \in \sem{\Gamma, x \of \dZ \rotypes e : \dR} (\gamma, k)} |u - t| < 2^{-k}$,}
      \end{aligned}
    \\
    \PPerr   & \text{if no such $t \in \IR$ exists.}
  \end{cases}
\end{align*}
\end{mdframed}
\caption{Denotational semantics of pure expressions}
\label{figure:ro-denotations}
\end{figure}

\begin{figure}
  \begin{mdframed}
  \centering
  \small
\begin{align*}
\sem{\Gamma; \Delta \rwtypes e : \tau} \, \gamma \, \delta &=
  \begin{aligned}[t]
   &\PPlet
     {v}
     {\sem{\Gamma, \Delta \rotypes e : \tau} \, (\gamma, \delta)} \\
             &
        \quad
     \pure{(\delta, v)}
    \end{aligned}
\\
\sem{\Gamma; \Delta \rwtypes c_1 ; c_2 : \tau} \, \gamma \, \delta &=
   \begin{aligned}[t]
   &\PPlet
     {(\delta', \semuu)}
     {\sem{\Gamma; \Delta \rwtypes c_1 : \dU} \, \gamma \, \delta} 
   \\
   &  
     \quad\sem{\Gamma; \Delta \rwtypes c_2 : \tau} \, \gamma \, \delta'
       \end{aligned}
\\
\sem{\Gamma ; \Delta \rwtypes (\cvar{x}{e} c) : \tau} \, \gamma \, \delta &=
  \begin{aligned}[t]
   &\PPlet
     {v}
     {\sem{\Gamma, \Delta \rotypes e : \sigma} \, (\gamma, \delta)}
   \\
   &
    \PPletx
     {((\delta', v'), v'')}
     \\
     &
     \qquad\PPinx{\sem{\Gamma; \Delta, x \of \sigma \rwtypes c : \tau} \, \gamma \, (\delta,v)}
   \\
    &
     \quad\pure{(\delta', v'')}
  \end{aligned}
\\
\sem{\Gamma ; \Delta \rwtypes (\clet{x}{e}) : \dU} \, \gamma \, \delta &=
  \begin{aligned}[t]
  &\PPlet
    {v}
    {\sem{\Gamma, \Delta \rotypes e : \tau} \, (\gamma, \delta)}
        \\
        &
        \quad
    \pure{(\delta[x :=v]), \semuu)}
    \end{aligned}
\\
\sem{\Gamma ; \Delta \rwtypes (\cif e \cthen c_1 \celse c_2 \cend) : \tau} \, \gamma \, \delta &=
  \begin{aligned}[t]
  &\PPlet
    {b}
    {\sem{\Gamma, \Delta \rotypes e : \dU} \, (\gamma, \delta)}
    \\
    & \quad \begin{cases}
    \sem{\Gamma; \Delta \rwtypes c_1 : \tau} \, \gamma \, \delta & \text{if $b = \semtt$} \\
     \sem{\Gamma; \Delta \rwtypes c_2 : \tau} \, \gamma \, \delta & \text{if $b = \semff$}
     \end{cases}
  \end{aligned}
\end{align*}
\end{mdframed}
\caption{Denotational semantics of general expressions, excluding $\mathtt{case}$ and $\mathtt{while}$}
\label{figure:rw-denotations}
\end{figure}

\Cref{figure:ro-denotations} assigns denotational semantics to pure expressions.
One point that deserves explanation is the denotation of $\cinv{e}$, when~$e$ is a real expression representing~$0$. Since there is no appropriate real value to be given, the denotation could be chosen to be 
either $\PPerr$ or $\PPbot$. We choose the latter, as it reflects the fact that an algorithm for calculating reciprocal will run forever, given a representation of the real number~$0$, without ever returning any erroneous approximation to a result value.
Similarly, $\PPbot$ is given as the denotation of  $e_1 \rlt e_2$ when $e_1,e_2$ are two real expressions representing equal numbers, reflecting the fact that an algorithm trying to distinguish between the two numbers will run forever when given equal inputs.

The most complex definition in \cref{figure:ro-denotations} is the semantics of the  limit operation $\clim{x}{e}$.
In this definition, note that there is at most one $t \in \IR$ satisfying the first condition.
Its existence places strong requirements on the expression $e$, which must represent a sequence of sets of real numbers, such that every real number~$u$ in the $k$-th set lies within a distance of $2^{-k}$ of~$t$.
That is, every choice of a real number from every set furnishes a Cauchy sequence rapidly converging to a common limit~$t$.
The use of a sequence of sets allows the behaviour of~$e$ to be nondeterministic, but this nondeterminism is highly constrained by the common limit requirement. Furthermore, $e$ is neither allowed to diverge nor be erroneous. If any of the conditions required for~$t$ to exist fails, then the $\clim{x}{e}$ computation is declared erroneous.
This is appropriate because, in an algorithmic implementation of the limit operation, the source of error may occur deep in the computation (e.g., only at some high value for the integer~$k$) meaning that the algorithm may, before the error transpires, return erroneous information in the form of approximating values to a non-existent limit.

\Cref{figure:rw-denotations} assigns semantics to
several of the general expression  constructors: sequencing $c_1 ; c_2$,
local variable declarations $\cvar{x}{e} c$,
assignments $\clet{x_{i}}{e}$,
conditionals $\cif e \cthen c_1 \celse c_2 \cend$, and expressions $e$ qua commands.

Next, let us define the semantics of guarded choice
\begin{equation*}
  \Gamma ; \Delta \rwtypes (\ccase e_1 \To c_1 \such \cdots \such e_n \To c_n \cend) : \tau.
\end{equation*}
Using the abbreviations
$\sem{e_i} = \sem{\Gamma, \Delta \rotypes e_i : \dB}$
and
$\sem{c_i} = \sem{\Gamma; \Delta \rwtypes c_i : \tau}$
we set
\begin{align*}
  &\sem{\Gamma ; \Delta \rwtypes (\ccase e_1 \To c_1 \such \cdots \such e_n \To c_n \cend) : \tau} \, \gamma \, \delta = 
  \\
  &\quad
  \begin{cases}
    \emptyset \qquad\text{if $\some{i} \sem{e_i} \, (\gamma, \delta) = \emptyset \lor (\semtt \in \sem{e_i} \, (\gamma, \delta) \land \sem{c_i} \, \gamma \, \delta = \emptyset)$}
    \\
    S
    \cup \{ \bot \such \all{i} \sem{e_i} (\gamma, \delta) \neq \pure{\semtt} \}
    \qquad\text{otherwise}
  \end{cases}
  \\
  &\quad\text{where $S = \bigcup \left\{\sem{c_i} \, \gamma \, \delta \such
                 1 \leq i \leq n \land
                \semtt \in \sem{e_i} (\gamma, \delta) \right\}.$}
\end{align*}
The idea behind this definition is as follows. 
The $n$ (potentially nondeterministic) guard expressions $e_1, \dots, e_n$ are evaluated in parallel. 
Ignoring, for the moment, the possibility that one of these expressions might be erroneous, suppose
that one of them, 
$e_i$ say, evaluates to $\semtt$. If this occurs, then the parallel evaluation of the other guards is terminated and the continuation
$c_i$ is executed. Note that the choice of $i$ here is potentially nondeterministic.
If instead none of the guards evaluates to  $\semtt$, then none of the continuations is triggered and we consider this as amounting to nontermination (no error is caused), so we include 
$\bot$ in the denotation of the case expression. Note that this $\bot$ can arise as a result of \emph{bona fide} nontermination
(none of the guards terminates) or of deadlock (all guards terminate with $\semff$).
In the case that any of the $e_i$ guards causes an error (i.e., has denotation $\emptyset$), or if some $e_i$ has a possible evaluation to $\semtt$ that triggers the execution of an erroneous continuation command $c_i$, then the case statement is itself given the error denotation $\PPerr$.
Some of the subtleties of guarded choice are illustrated through the examples in
Section~\ref{sec:boolean-ops} below.

It remains to define the semantics of while loops.
As usual, the meaning of a while loop is required to be invariant under unrolling; i.e.,
\begin{align*}
& \sem{\Gamma;\Delta \rwtypes \cwhile e \cdo c \cend :\dU}\,\gamma\,\delta  = 
\\
& \qquad \sem{\Gamma;\Delta \rwtypes\cif e \cthen (c\, ; \, \cwhile e \cdo c \cend) \celse \cskip \cend :\dU}\,\gamma\,\delta
\end{align*}
That is, we want the value $\sem{\Gamma ; \Delta \rwtypes (\cwhile e \cdo c \cend) : \dU} \, \gamma $ to be a fixed point of the map 
\begin{equation}
  \label{eq:def-W}
  \begin{aligned}
    W_{\gamma} &:  \PP{\sem{\Delta} \times \{\semuu\}}^{\sem{\Delta}}
    \to \PP{\sem{\Delta} \times \{\semuu\}}^{\sem{\Delta}}
    \\
    W_\gamma&(h) \, \delta 
    \defeq \PPlet{b}{\sem{e}\,(\gamma,\delta)}
      \begin{cases}
        \lift{(h \circ \pi_1)}(\sem{\Gamma; \Delta \rwtypes c : \dU}\,\gamma\,\delta) & \text{if $b = \semtt$} \\
        \pure{(\delta, \semuu)} & \text{if $b = \semff$}
      \end{cases}
  \end{aligned}
\end{equation}
where $\pi_1$ is the projecting isomorphism from $\sem{\Delta} \times \{\semuu\}$ to $\sem{\Delta}$.

As is standard, we find the appropriate fixed point  by showing that $\PP{S}$ carries the structure of a domain ($\omega$-complete partial order with least element) and that $W_\gamma$ is an $\omega$-continuous function with respect to the induced pointwise order on the function space 
$\PP{\sem{\Delta} \times \{\semuu\}}^{\sem{\Delta}}$.
This allows the definition of $\sem{\Gamma ; \Delta \rwtypes (\cwhile e \cdo c \cend) : \dU} \, \gamma $ as the least fixed point $\mathsf{LFP}(W_\gamma)$ of $W_\gamma$:
\begin{equation}
\label{equation:lfp}
  \sem{\Gamma ; \Delta \rwtypes (\cwhile e \cdo c \cend) : \dU} \, \gamma  \defeq
  \mathsf{LFP}(W_\gamma) \enspace .
\end{equation}

The required partial order on $\PP{S}$ is that of the Plotkin powerdomain~\cite{plotkin1976powerdomain} modified to take account of our use of the error set $\PPerr$:
\begin{equation*}
  X \PPleq Y ~ \defiff~
  X = Y  ~\lor ~
  (\bot \in X \, \land\, (Y = \PPerr \lor (X \!\setminus\! \PPbot \subseteq Y))).
\end{equation*}
For $X, Y$ other than the error set $\PPerr$, the above coincides with the usual Egli-Milner order of the Plotkin powerdomain.
The positioning of~$\PPerr$ within the partial order is motivated by the following considerations.
The denotational semantics of $\cwhile e \cdo c \cend$ in environment~$\gamma$ is approximated by
applying $W_\gamma$ iteratively to the constantly-bottom function  $\delta \mapsto \PPbot$, yielding an $\omega$-chain
\begin{equation*}
  (\delta \mapsto \PPbot) \PPleq W_\gamma(\delta \mapsto \PPbot) \PPleq W_\gamma^2(\delta \mapsto \PPbot) \PPleq \dots
\end{equation*}
with each new approximation corresponding to one further level of unrolling of the loop. 
The presence of~$\bot$ in any $W_\gamma^n(\delta \mapsto \PPbot)\,\delta_0$ can indicate that
further unfoldings are needed to determine
$\sem{\Gamma ; \Delta \rwtypes (\cwhile e \cdo c \cend) : \dU} \, \gamma \, \delta_0$.
It is possible that some such further unfolding will result in an erroneous computation, at which point the denotational semantics will assume the value~$\PPerr$. For this reason it is necessary to have $X \PPleq \PPerr$, whenever $\bot \in X$.
In the case that $\PPbot \notin W_\gamma^n(\delta \mapsto \PPbot)\,\delta_0$, it holds that 
$\sem{\Gamma ; \Delta \rwtypes (\cwhile e \cdo c \cend) : \dU} \, \gamma \, \delta_0 = 
W_\gamma^n(\delta \mapsto \PPbot)\,\delta_0$, i.e., the semantics is fully determined at iteration $n$, and does not change under further iterations.
Thus nonempty sets $X$ with $\bot \notin X$ do not approximate $\PPerr$, i.e., $X \not\PPleq \PPerr$.

\begin{proposition} 
\label{prop:domain}
For any set $S$, it holds that $(\PP{S}, \PPleq)$ is an $\omega$-complete partial order with least element.
\end{proposition}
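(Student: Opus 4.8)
The plan is to verify in turn that $\PPleq$ is a partial order, that $\PPbot$ is a least element, and---the substantive part---that every $\omega$-chain has a least upper bound. Reflexivity is immediate from the first disjunct $X = Y$ in the definition of $\PPleq$. For antisymmetry, given $X \PPleq Y$ and $Y \PPleq X$ with $X \neq Y$, I would observe that both $\bot \in X$ and $\bot \in Y$ hold, which rules out either set being $\PPerr = \emptyset$; the remaining disjuncts then give $X \setminus \PPbot \subseteq Y$ and $Y \setminus \PPbot \subseteq X$, and since both sets contain $\bot$ this forces $X = Y$. Transitivity is a similar case analysis; in the only nontrivial situation, $X \PPleq Y \PPleq Z$ with all three distinct, we have $\bot \in X$ and $\bot \in Y$, the latter forcing $Y \neq \PPerr$ and hence $X \setminus \PPbot \subseteq Y$. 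If $Z = \PPerr$ then $X \PPleq Z$ is immediate from $\bot \in X$; otherwise $Y \setminus \PPbot \subseteq Z$, and chaining $X \setminus \PPbot \subseteq Y \setminus \PPbot \subseteq Z$ gives $X \PPleq Z$. That $\PPbot \PPleq X$ for every $X$ is clear, since $\bot \in \PPbot$ and $\PPbot \setminus \PPbot = \emptyset \subseteq X$.

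The heart of the argument is $\omega$-completeness, and the organising observation is the following structural lemma: if $\bot \notin X$ and $X \PPleq Y$, then $X = Y$ (the second disjunct of $X \PPleq Y$ requires $\bot \in X$). Consequently, as soon as a chain $X_0 \PPleq X_1 \PPleq \cdots$ passes through a set not containing $\bot$---in particular through $\PPerr$, which is empty and hence $\bot$-free---it becomes constant from that point on. This yields a clean dichotomy on the chain, which I would use to split the completeness proof into two cases.

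In the first case, some $X_N$ satisfies $\bot \notin X_N$; by the lemma the chain is constant from index $N$, so $X_N$ is evidently the least upper bound. In the second case every $X_n$ contains $\bot$, and I claim the least upper bound is $U \defeq \bigcup_{n} X_n$. I would first check $U \in \PP{S}$: since $\bot \in U$, the defining requirement (that an infinite set must contain $\bot$) holds trivially. That $U$ is an upper bound follows because $\bot \in X_n$, $U \neq \PPerr$, and $X_n \setminus \PPbot \subseteq X_n \subseteq U$. For leastness, take any upper bound $Y$; if $Y = \PPerr$ then $U \PPleq \PPerr$ holds because $\bot \in U$, and otherwise each $X_n \PPleq Y$ gives $X_n \setminus \PPbot \subseteq Y$ (trivially when $X_n = Y$, and via the strict disjunct when $X_n \neq Y$), so $U \setminus \PPbot = \bigcup_n (X_n \setminus \PPbot) \subseteq Y$, whence $U \PPleq Y$.

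The step I expect to be the main obstacle is getting the interaction between the error element $\PPerr$ and the union formula exactly right. It is tempting to define the least upper bound uniformly as $\bigcup_n X_n$, but this fails precisely when the chain stabilises at a $\bot$-free set (e.g.\ $\PPbot \PPleq \{a\} \PPleq \{a\} \PPleq \cdots$, whose supremum is $\{a\}$, not $\{\bot, a\}$); the structural lemma is exactly what isolates this phenomenon and justifies the two-case split. One must also confirm that $\PPerr$, although not a greatest element, is correctly positioned as an upper bound of every $\bot$-containing set, so that the union does not spuriously undershoot it---this is the content of the $Y = \PPerr$ subcase above.
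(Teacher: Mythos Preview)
Your proof is correct and follows essentially the same approach as the paper: the same two-case split on whether every $X_n$ contains $\bot$, with the supremum given by $\bigcup_n X_n$ in the first case and by the eventually-constant value in the second. The paper's proof is terser---it omits the verification of the partial-order axioms entirely and leaves the upper-bound and leastness checks as ``easy to check''---so your added detail (the structural lemma, the explicit handling of $\PPerr$ in the leastness argument) fills in exactly what the paper elides.
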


\begin{proof}
The least element is $\PPbot$.
For $\omega$-completeness, suppose that
\begin{equation*}
  X_0 \PPleq X_1 \PPleq X_2 \PPleq \dots
\end{equation*}
is an $\omega$-chain.
In the case that every $X_n$ contains $\bot$, it is easy to check that the supremum is $ \dsup{i} X_i \defeq \bigcup_i X_i$.
If instead there exists $n$ such that $\bot \notin X_n$ then necessarily $X_m = X_n$ for all $m \geq n$, so 
the supremum is $\dsup{i} X_i \defeq X_n$. (In the case that no $X_n$ is~$\PPerr$, the above coincides with the Plotkin powerdomain.)
\end{proof}

In the proof of the following proposition we use the \defemph{strict union} operation on $\PP{S}$, which models nondeterministic choice:
\[
X \uplus Y = \begin{cases}
\emptyset & \text{if $X = \emptyset$ or $Y = \emptyset$,} \\
X \cup Y & \text{otherwise.}
\end{cases}
\]

\begin{proposition} 
\label{prop:continuity}
The function $W_\gamma$ defined by \eqref{eq:def-W} is continuous with respect to the
pointwise partial order on $\PP{\sem{\Delta} \times \{\semuu\}}^{\sem{\Delta}}$.
\end{proposition}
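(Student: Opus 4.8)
The plan is to reduce the continuity of $W_\gamma$ to the continuity of a single semantic building block---the Kleisli extension $\lift{(-)}$ regarded as a function of its continuation---and then to verify that building block directly. First I would note that the order and suprema on the function space $\PP{\sem{\Delta} \times \{\semuu\}}^{\sem{\Delta}}$ are computed pointwise, each fibre being the $\omega$-cpo of \Cref{prop:domain}; hence it suffices to prove that for each fixed state $\delta \in \sem{\Delta}$ the map $h \mapsto W_\gamma(h)\,\delta$ is monotone and preserves suprema of $\omega$-chains. Inspecting \eqref{eq:def-W}, the only dependence on $h$ sits in the $b = \semtt$ branch, and the sets $\sem{e}\,(\gamma,\delta)$ and $\sem{\Gamma;\Delta \rwtypes c : \dU}\,\gamma\,\delta$ over which one binds do not involve $h$. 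Writing $X \defeq \sem{e}\,(\gamma,\delta)$ and $Y \defeq \sem{\Gamma;\Delta \rwtypes c : \dU}\,\gamma\,\delta$, I can express $W_\gamma(h)\,\delta = \lift{g_h}(X)$, where $g_h(\semff) = \pure{(\delta,\semuu)}$ is constant in $h$ and $g_h(\semtt) = \lift{(h\circ\pi_1)}(Y)$. Thus, modulo the standard facts that constant maps, precomposition with the fixed map $\pi_1$, and composites of continuous maps are continuous, everything reduces to the following key lemma.

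The key lemma asserts that for any fixed $X \in \PP{S}$ the map $\Phi_X : f \mapsto \lift{f}(X)$ from $\PP{T}^{S}$ to $\PP{T}$ is continuous. For monotonicity I would take $f \PPleq f'$ pointwise and argue by cases on the definition of $\lift{(-)}$. The guiding observation is that $\PPerr$ is maximal: $\PPerr \PPleq B$ forces $B = \PPerr$, so if some $f(x) = \PPerr$ then $f'(x) = \PPerr$ too, and both sides collapse to $\PPerr$. When no $f(x)$ is $\PPerr$, one uses that each $f(x) \PPleq f'(x)$ is either an equality or satisfies $\bot \in f(x)$ and $f(x) \setminus \PPbot \subseteq f'(x)$, and then checks the two clauses of $\PPleq$ directly for the unions, treating separately the subcase in which some $f'(x) = \PPerr$, where one must exhibit $\bot \in \lift{f}(X)$.

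For preservation of suprema, let $f = \dsup{n} f_n$ (computed pointwise) for an $\omega$-chain $(f_n)$; monotonicity already yields $\dsup{n}\lift{f_n}(X) \PPleq \lift{f}(X)$, so only equality remains. If some $f_n(x) = \PPerr$, then $f(x) = \PPerr$ and the chain of images is eventually $\PPerr$, matching $\lift{f}(X) = \PPerr$. Otherwise I would split on the shape of the chain $\bigl(\lift{f_n}(X)\bigr)$ using the explicit suprema of \Cref{prop:domain}: if some $\lift{f_m}(X)$ omits $\bot$, the chain is constant from $m$ on, each coordinate $f_n(x)$ stabilises, and $\lift{f}(X) = \lift{f_m}(X)$; if every $\lift{f_n}(X)$ contains $\bot$, the supremum is $\bigcup_n \lift{f_n}(X)$, whose non-$\bot$ part agrees with that of $\lift{f}(X)$ by interchanging the two unions.

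The step I expect to be the crux is showing, in this last subcase, that $\bot \in \lift{f}(X)$, so that the $\bot$-components also match. If $\bot \in X$ this is immediate. Otherwise $\bot$ could a priori be supplied by a different coordinate $x$ at each level $n$ and then disappear in every individual limit $\dsup{n} f_n(x)$. This is exactly where the defining constraint of the powerdomain enters: when $\bot \notin X$ the set $X$ is finite, and along the increasing chain each coordinate $x$ can lose $\bot$ at most once and never regains it, since $\bot \notin f_n(x)$ forces $f_n(x) = f_{n+1}(x)$. Hence if $\bot$ were absent from all the limits $\dsup{n} f_n(x)$, it would be absent from $\lift{f_n}(X)$ for all sufficiently large $n$, contradicting the subcase hypothesis; so some coordinate retains $\bot$ throughout and $\bot \in \lift{f}(X)$. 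With the key lemma established, assembling the continuity of $h \mapsto W_\gamma(h)\,\delta$ from the continuous pieces identified above, and then lifting pointwise to $W_\gamma$, is routine.
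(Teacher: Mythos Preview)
Your argument is correct and close in spirit to the paper's, with a tidier decomposition. Both proofs hinge on the continuity of the Kleisli extension $f \mapsto \lift{f}(X)$ in its function argument. You apply this single lemma twice---once for the outer bind over $\sem{e}(\gamma,\delta)$ and once for the inner bind over $\sem{c}\,\gamma\,\delta$---whereas the paper treats these as two separate building blocks (monadic evaluation, checked for continuity in \emph{both} arguments, and a conditional combinator $C$ analysed by cases on the guard set), thereby proving more than is strictly needed, since continuity in the set argument $X$ is never actually used for~$W_\gamma$. Your treatment of the crux, showing that $\bot \in \lift{f}(X)$ whenever every $\lift{f_n}(X)$ contains~$\bot$, is also more explicit than the paper's, which dispatches the analogous quantifier swap with a bare ``Clearly''; your stabilisation argument exploiting the finiteness of $X$ when $\bot \notin X$ is exactly what is required. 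One small point worth making explicit: in that final sub-case you are tacitly using that $f(x) = \dsup{n} f_n(x) \neq \PPerr$ for every $x \in X$, which follows because no $f_n(x)$ is $\PPerr$ and the description of suprema in \Cref{prop:domain} never yields $\PPerr$ from a chain avoiding it; without this, exhibiting $\bot \in f(x)$ for some $x$ would not by itself give $\bot \in \lift{f}(X)$.
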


\begin{proof}
The function $W_\gamma$ is the composition of several maps, two of which need their continuity checked.
The first one is monadic evaluation
\begin{align*}
   \PP{T}^S \times \PP{S} &\to \PP{T}  \\
  (g, X) &\mapsto \lift{g}(X)
\end{align*}
Monotonicity is straightforward.
To establish continuity in~$X$, consider an $\omega$-chain $X_0 \PPleq X_1 \PPleq X_2 \PPleq \dots$.
If every $X_n$ contains $\bot$, then 
\begin{multline*}
\textstyle
\lift{g}(\dsup{i}X_i) =
\lift{g}\left(\bigcup_{i}X_i\right) =
\{\bot\} \cup \bigcup \left\{g(x) \such x \in \left(\bigcup_i X_i\right) - \{\bot\}\right\}
\\
\textstyle
= \{\bot\} \cup \bigcup_i \{g(x) \such x \in X_i - \{\bot\}\}
= \dsup{i}  \lift{g}(X_i).
\end{multline*}
Otherwise the chain is eventually constant and~$g$ preserves its supremum because it is monotone.

To establish continuity in $g$, consider an $\omega$-chain $g_0 \PPleq g_1 \PPleq g_2 \PPleq \dots$ with respect to the pointwise order on $\PP{T}^S$.
If there are $k \in \IN$ and $x \in X$ such that $g_k(x) = \PPerr$ then $\lift{(\dsup{i} g_i)}(X) = \PPerr = \dsup{i} \lift{g_i}(X)$.
Thus it remains to prove that the sets
\begin{equation*}
  \textstyle
  \lift{(\dsup{i} g_i)}(X) =
      \{\bot \such \bot \in X\} \cup 
      \bigcup_{x \in X \setminus \{\bot\}} \dsup{i} g_i(x) \eqdef A
\end{equation*}
and
\begin{equation*}
  \textstyle
  \dsup{i} \lift{g_i}(X) =
  \dsup{i} \left(
      \{\bot \such \bot \in X\} \cup 
      \bigcup_{x \in X \setminus \{\bot\}} g_i(x)
   \right) \eqdef B.
\end{equation*}
are equal, assuming $g_i(x) \neq \PPerr$ for all $i \in \IN$ and $x \in X$.
Clearly, $\bot \in A \liff \bot \in B$.
To show that $A \setminus \{\bot\} = B \setminus \{\bot\}$, we first note that, for all $k \in \IN$, $x \in X$ and $y \neq \bot$,
\begin{equation*}
  y \in \dsup{i} g_i(x)
  \iff
  \some{i \in \IN} y \in g_i(x) \setminus \{\bot\},
\end{equation*}
thanks to the standing assumption that~$\PPerr$ does not appear in the supremum.
Now, if $y \in A \setminus \{\bot\}$ then $y \in \dsup{i} g_i(x)$ for some $x \in X \setminus \{\bot\}$,
hence $y \in g_j(x)$ for some $j \in \IN$, and so $y \in B$.
Conversely, if $y \in B \setminus \{\bot\}$ then $y \in g_j(x)$ for some $j \in \IN$ and $x \in X \setminus \{\bot\}$, hence $y \in \dsup{i} g_i(x)$, and so $y \in A$.

The other function partaking in $W_\gamma$ is
\begin{align*}
  C &: \PP{\{\semff, \semtt\}} \times \PP{S} \times \PP{S} \to \PP{S}
  \\
  C &: (X,Y,Z) \mapsto \PPlet{b}{X} 
      \begin{cases} 
        Y & \text{if $b = \semtt$} \\
        Z & \text{if $b = \semff$} 
      \end{cases}
\end{align*}
We only verify continuity in $Y$, which is done by case analysis on~$X$:
\begin{itemize}
\item If $\semtt \notin X$ then, $C$ is constant in~$Y$.
\item If $X = \pure{\semtt}$ then $C(\pure{\semtt},Y,Z) = Y$ is a projection.
\item If $X = \{\semtt, \semff\}$ then $C(Y) = Y \uplus Z$.
\item If $X = \{\semtt,\bot\}$ then $C(Y) = Y\cup \{\bot \such Y \neq \PPerr\}$.
\item If $X = \{\semtt, \semff,\bot\}$ then  $C(Y) = (Y \uplus Z) \cup \{\bot \such (Y \uplus Z) \neq \PPerr\}$.\end{itemize}
In each case, continuity in $Y$ is easy to show.
\end{proof}

It follows from \cref{prop:domain,prop:continuity} that the semantic definition of while commands in~\eqref{equation:lfp} is well-defined.

\subsection{Semantics of first-order functions}
\label{sec:semant-first-order}

We briefly address the denotational semantics of first-order functions from \cref{sec:first-order-func}.
The denotation of a function
\begin{equation*}
  \cfunction f {x_1 \of \tau_1, \ldots, x_n \of \tau_n} {\sigma} {e}
\end{equation*}
is just the denotation of its body,
\begin{equation*}
  \sem{f} \defeq
  \sem{x_1 \of \tau_1, \ldots, x_n \of \tau_n \rotypes e : \sigma} :
  \sem{\tau_1} \times \cdots \times \sem{\tau_n} \to \PP{\sem{\sigma}}.
\end{equation*}
We need to be a bit careful about the denotation of a function call $f(e_1, \ldots, e_n)$ because the arguments
$e_1, \ldots, e_n$ may diverge or yield nondeterministic values, so it matters if and when they are evaluated.
We opt for the call-by-value evaluation strategy that is most commonly seen in imperative languages.

Given sets~$S$ and~$T$, define the \defemph{monadic pairing}
$\PPtuple{{-}, {-}} : \PP{S} \times \PP{T} \to \PP{S \times T}$
by
\begin{equation*}
  \PPtuple{X, Y} \defeq (\PPlet{x}{X} \PPlet{y}{Y} \{(x, y)\}).
\end{equation*}
Note that the binding order does not matter, i.e., $(\PPlet{x}{X} \PPlet{y}{Y} \cdots) = (\PPlet{y}{Y} \PPlet{x}{X} \cdots)\})$ because $\Pstar$ is a commutative monad.
Monadic pairing is easily extended from pairs to $n$-tuples for arbitrary~$n$.

The denotation of a function call is application adorned with the monad structure:
\begin{align*}
  \sem{\Gamma \rotypes f(e_1, \ldots, e_n) : \sigma} \gamma \defeq &
  \lift{\sem{f}} \PPtuple{\sem{e_1} \gamma, \ldots, \sem{e_n} \gamma}.
\end{align*}

\section{Nondeterminism and parallelism}
\label{sec:boolean-ops}

The guarded case construct of Clerical requires parallel evaluation of the guards
leading to potential nondeterminism.  As a result, several basic operations involving nondeterminism and parallel evaluation are definable in Clerical.

A simple binary nondeterministic choice between two pure expressions
$\Gamma \rotypes e_1 : \tau$ and $\Gamma \rotypes e_2 : \tau$ is implemented
by
\[\Gamma \rotypes (\ccase \ctrue \To e_1 \such \ctrue \To e_2 \cend) : \tau \]
This has the derived semantics
\[
\sem{\Gamma \rotypes (\ccase \ctrue \To e_1 \such \ctrue \To e_2 \cend) : \tau }\, \gamma
~ = ~ \sem{\Gamma \rotypes e_1 : \tau } \, \gamma\,
\uplus \,
\sem{\Gamma \rotypes e_2  : \tau }\,\gamma
\]
using the strict union operation defined above Proposition~\ref{prop:continuity}.

It is also possible to implement McCarthy's \defemph{ambiguous choice} between 
$\Gamma \rotypes e_1 : \tau$ and $\Gamma \rotypes e_2 : \tau$, by:
\[\Gamma \rotypes (\ccase (\cvar{x}{e_1} \ctrue) \To e_1 \such 
(\cvar{x}{e_2} \ctrue) \To e_2 \cend) : \tau \]
Writing $\Gamma \rotypes e_1 \,\mathtt{amb}\, e_2 : \tau$ for the above, 
we have
\begin{multline*}
\sem{\Gamma \rotypes  e_1 \,\mathtt{amb}\, e_2 : \tau }\, \gamma = \\
  (\sem{\Gamma \rotypes e_1 : \tau } \, \gamma\, \uplus \, \sem{\Gamma \rotypes e_2  : \tau }\,\gamma)
  \setminus
  \{ \bot \mid \bot \notin (\sem{\Gamma \rotypes e_1  : \tau }\,\gamma 
\,\cap \,\sem{\Gamma \rotypes e_2  : \tau}\,\gamma)\}\;.
\end{multline*}

A well-known issue with ambiguous choice is that it is not monotonic with respect to any reasonable powerdomain partial ordering \cite{LEVY2007221}, meaning that it does not have a domain-theoretic semantics. Indeed, such a failure of monotonicity holds with respect to 
the ordering $\PPleq$ we have defined on our powerdomain $\PP{-}$. It follows that the denotational semantics of Clerical expressions does not, in general, act monotonically in the semantics of subexpressions. We present a  simple example of this phenomen.

Consider the expression below, which is parametric in the subexpression
$\rotypes e  : \dB$:
\begin{lstlisting}
case
| e => while true do skip end
| true => skip
end
\end{lstlisting}
In the case that $\sem{e} =  \pure{\bot}$, the denotation of the whole expression is $\{\semuu\} $. If instead $\sem{e} =  \pure{\semtt}$, then the denotation of the expression is $\{\semuu, \bot\}$. This breaks monotonicity because
$\pure{\bot} \PPleq  \pure{\semtt}$ in $\PP{\{\semff, \semtt\}}$, but $\{\semuu\} \not \PPleq \{\semuu, \bot\}$  in $\PP{\{\semuu\}}$.
Similarly, considering the case in which $e$ is an erroneous expression (i.e., $\sem{e} = \PPerr$),
we have $\pure{\bot} \PPleq  \PPerr$, but $\{\semuu\} \not \PPleq \PPerr$.

Given the non-monotonicity properties illustrated above, it is perhaps surprising that it is possible to give Clerical a denotational semantics involving a domain-theoretic fixed-point argument to establish the semantics of while loops. The reason for this is that the operator $W_\gamma$, used in defining the semantics of while loops, is defined purely as a combination of conditional statements and sequencing, and does not involve the problematic non-monotonic aspects of Clerical. Indeed, as Proposition~\ref{prop:continuity} establishes,  the particular operator $W_\gamma$ is always continuous, hence \emph{a fortiori}  monotone. 

Clerical, as we have defined it, does not include any primitive operator for manipulating booleans. This does not limit expressivity since logical operations are definable using the conditional construct.
For example, negation of a boolean expression $b$ is defined by
\[
\neg b \defeq (\cif b \cthen \cfalse \celse  \ctrue  \cend)
\]
The most concise way of defining the disjunction of two boolean typed expressions $b_1$ and $b_2$ is by $\cif b_1 \cthen \ctrue \celse  b_2 \cend$. This is asymmetric: if 
$\sem{b_1} = \pure{\semtt}$ and $\sem{b_2} = \pure {\bot}$ then the disjunction has denotation $\pure{\semtt}$, whereas if
$\sem{b_1} = \pure{\bot}$ and $\sem{b_2} = \pure {\semtt}$ then 
the resulting denotation is $\pure{\bot}$. It is similarly possible to define a symmetric strict disjunction by 
\[\cif b_1 \cthen (\cif b_2 \cthen \ctrue \celse \ctrue \cend) \celse  b_2 \cend\]
More interestingly, the guarded case construct can be used to define Plotkin's \defemph{parallel or} \cite{plotkin1977lcf}, another symmetric version of  disjunction which requires parallel evaluation, by
\begin{equation*}
b_1 \cdisj b_2 \defeq
(\ccase
      b_1 \To \ctrue
\such b_2 \To \ctrue
\such \neg b_1 \To \neg b_2
\cend)
\end{equation*}
From a logical perspective, when applied to deterministic expressions, $b_1 \cdisj b_2$ computes the disjunction of $b_1$ and $b_2$ from Kleene (and Priest) 3-valued logic.

Since Clerical is a nondeterministic language, the defined logical operators all have an induced effect on nondeterministic and erroneous expressions. For example, the derived full semantics for parallel or is:
\begin{align*}
\sem{\Gamma \rotypes b_1 \cdisj b_2 : \dB }(\gamma) &= \bigcup_{\substack{v_1 \in \sem{\Gamma \rotypes b_1:\dB}(\gamma)\\v_2 \in \sem{\Gamma \rotypes b_2 : \dB}(\gamma)}}\begin{cases}
\{\semtt\}&\text{if }v_1 = \semtt \lor v_2 = \semtt,\\
\{\semff\}&\text{if }v_1 = \semff \land v_2 = \semff,\\
\{\bot\}&\text{otherwise}.
\end{cases}
\end{align*}


\section{Specification logic}
\label{sec:specification-logic}

We devise a Hoare-style specification logic for proving the correctness of Clerical expressions.
As briefly discussed in Section~\ref{sec:overview}, 
it is necessary to have both forms of correctness: partial and total.
While both types of correctness guarantee the avoidance of erroneous execution, the difference between them is that
partial correctness allows non-termination and total correctness forbids it. 
Having both forms of correctness is necessary  for the logic's correctness and expressivity, as will become clear when we present the proof rules in \cref{sec:proof-rules}.

First, we  formulate the correctness specifications we will use, which are written in the style of Hoare-style triples. For each pure expression and general expression, we define the notions of partial and total correctness, yielding four variants, as follows.

Given a context $\Gamma = (x_1 \of \tau_1, \ldots, x_n \of \tau_n)$, a pure expression $\Gamma \rotypes e : \tau$, a \defemph{precondition} $A \subseteq \sem{\Gamma}$ and a \defemph{postcondition} $B \subseteq \sem{\Gamma} \times \sem{\tau}$, we define the \defemph{partial} and \defemph{total (read-only) correctness triples} respectively as
\begin{align*}
  \rotrip[\prt] {\Gamma} {A} {e} {B}
  &\iff
  \all{\gamma \in A} 
  (\sem{e} \, \gamma \neq \emptyset 
\,\land \,
  \all{v \in \sem{e} \, \gamma}
 (v \neq \bot \lthen (\gamma, v) \in B)),
  \\
  \rotrip[\tot] {\Gamma} {A} {e} {B}
  &\iff
   \all{\gamma \in A} 
  (\sem{e} \, \gamma \neq \emptyset 
\,\land \,
  \all{v \in \sem{e} \, \gamma}
 (v \neq \bot \land(\gamma, v) \in B)).
\end{align*}
Here, $\sem{e}$ is an abbreviation for $\sem{\Gamma \rotypes e : \tau}$.
Both forms of correctness state that the error state is not reached and that any value computed by~$e$ satisfies the postcondition. Note that partial correctness does not require termination, whereas total correctness guarantees it.

Given a logical formula $\phi$ in variables $\vec{x} = (x_1, \ldots, x_n) \in \sem{\Gamma}$ and a formula $\psi$ in variables $(\vec{x}, y) \in \sem{\Gamma} \times \sem{\tau}$, the triple
\begin{equation*}
  \rotrip[\prt]
  {\Gamma} {\{\vec{x} \in \sem{\Gamma} \mid \phi\}}
  {e}
  {\{(\vec{x}, y) \in \sem{\Gamma} \times \sem{\tau} \such \psi\}}
\end{equation*}
can be written more concisely as
\begin{equation}
  \label{eq:notation-rotrip}
  \rotrip[\prt]{\Gamma} {\phi} {e} {y \of \tau \such \psi}.
\end{equation}
Read the above as: if $\phi$ holds then $e$ does not err, and if it terminates then every value~$y$ computed by~$e$ satisfies~$\psi$. By replacing the symbol $\prt$ with $\tot$ we obtain the analogous notation for total correctness: if $\phi$ holds then~$e$ does not err, it terminates and every value~$y$ computed by~$e$ satisfies~$\psi$.

The analogous triples for general expressions are more complicated because they need to take state-change into account. Let $\Gamma; \Delta \rwtypes c : \tau$ be a general expression, $A \subseteq \sem{\Gamma} \times \sem{\Delta}$ and $B \subseteq \sem{\Gamma} \times  \sem{\Delta} \times \sem{\tau}$. Then we define the \defemph{partial} and \defemph{total (read-write) correctness triples} respectively as
\begin{multline*}
  \rwtrip[\prt] {\Gamma; \Delta} {A} {c} {B} \iff {} \\
  \all{(\gamma, \delta) \in A}
  (\sem{c} \, \gamma \, \delta \neq \emptyset  \, \land\,
  \all{w \in \sem{c} \, \gamma \, \delta} 
  (w \neq \bot \lthen (\gamma, \delta_w,v_w) \in B),
\end{multline*}
and
\begin{multline*}
  \rwtrip[\tot] {\Gamma; \Delta} {A} {c} {B} \iff {} \\
  \all{(\gamma, \delta) \in A}
  (\sem{c} \, \gamma \, \delta \neq \emptyset  \, \land\,
  \all{w \in \sem{c} \, \gamma \, \delta} \,
  (w \neq \bot \land (\gamma, \delta_w,v_w) \in B), 
\end{multline*}
where $\sem{c}$ is an abbreviation for $\sem{\Gamma; \Delta \rwtypes c : \tau}$, and we use the fact that
any element $w \in \sem{c} \, \gamma \, \delta$ which is not $\bot$ is of the form $w = (\delta_w,v_w)$, so the conclusion 
$(\gamma, \delta_w,v_w) \in B$ states that, in the new read-write state $(\gamma, \delta_w)$, the return value~$v_w$ satisfies the postcondition~$B$.

Given a formula $\phi$ in variables $(\vec{x}, \vec{y}) \in \sem{\Gamma} \times \sem{\Delta}$ and a formula $\psi$ in variables $(\vec{x}, \vec{y}, z) \in \sem{\Gamma} \times \sem{\Delta} \times \sem{\tau}$,
the triple
\begin{equation*}
  \rwtrip[\prt]
  {\Gamma; \Delta} {\{(\vec{x}, \vec{y}) \in \sem{\Gamma} \times \sem{\Delta} \such \phi\}}
  {c}
  {\{(\vec{x}, \vec{y}, z) \in \sem{\Gamma} \times \sem{\Delta} \times \sem{\tau} \such \psi\}}.
\end{equation*}
can again be written more concisely as
\begin{equation}
  \label{eq:notation-rwtrip}
  \rwtrip[\prt]{\Gamma; \Delta} {\phi} {c} {z \of \tau \such \psi}.
\end{equation}
Read the above as: if $\phi$ holds then~$c$ does not err, and if it terminates then it computes a value satisfying~$\psi$ in the new state.
By replacing~$\prt$ with~$\tot$ we get analogous notation for total correctness: if $\phi$ holds then~$c$ does not err, it terminates and computes a value satisfying~$\psi$ in the new state.

We introduce one further notational convention that streamlines reasoning about expressions of the trivial type~$\dU$: if~$\psi$ is a formula in which the variable~$y$ does not appear freely, then we write $\{ \psi \}$ instead of $\{ y \of \dU \such \psi \}$.

\subsection{Proof rules}
\label{sec:proof-rules}

The notation~\eqref{eq:notation-rotrip} is general in the sense that it can be used to express any correctness triple $\rotrip[\prt]{\Gamma}{A}{e}{B}$ by taking $\phi$ to be the formula $\vec{x} \in A$ and $\psi$ to be $(\vec{x}, y) \in B$.
A similar observation holds for~\eqref{eq:notation-rwtrip}, therefore, notations~\eqref{eq:notation-rotrip} and~\eqref{eq:notation-rwtrip} may be used freely in rules about correctness triples.

Many rules come in pairs differing only in the use of symbols~$\prt$ and~$\tot$. To avoid pointless duplication in such cases, we use the symbol~$\star$ to indicate that either~$\prt$ or~$\tot$ can be inserted in its place, consistently throughout a rule. For example, the rule \rref{Ro-Conj} in \cref{fig:rules-general} decompresses to the rules
\begin{mathpar}
  \inferenceRule{Ro-Part-Conj}{
    \rotrip[\prt] {\Gamma} {\phi} {e} {y \of \tau \such \psi_1} \\
    \rotrip[\prt] {\Gamma} {\phi} {e} {y \of \tau \such \psi_2}
  }{
    \rotrip[\prt] {\Gamma} {\phi} {e} {y \of \tau \such \psi_1 \land \psi_2}
  }

  \inferenceRule{Ro-Tot-Conj}{
    \rotrip[\tot] {\Gamma} {\phi} {e} {y \of \tau \such \psi_1} \\
    \rotrip[\tot] {\Gamma} {\phi} {e} {y \of \tau \such \psi_2}
  }{
    \rotrip[\tot] {\Gamma} {\phi} {e} {y \of \tau \such \psi_1 \land \psi_2}
  }
\end{mathpar}
We partition the rules into three groups:
the \emph{general rules} in \cref{fig:rules-general},
the \emph{arithmetical rules} in \cref{fig:rules-arithmetic}, and
the \emph{operational rules} in \cref{fig:rules-operational}.
In all of them, it is presupposed that the expressions appearing are well-typed in the indicated contexts and with the indicated type and that all the formulas are well-scoped in the given contexts.
Many rules have a logical side-condition for the rule to apply, written as an additional premise. 
When a formula $\phi$ in variables $\vec{x} = (x_1, \ldots, x_n) \in \sem{\Gamma}$  is written as a premise, it means that the rule only applies in the case that $\phi$ holds for all $\vec{x} \in \sem{\Gamma}$.

\begin{figure}
  \centering
  \begin{mdframed}
    \footnotesize
    \centering

    \textbf{Logical rules:}
    \begin{mathpar}
      \inferenceRule{Ro-Imply}{
        \rotrip {\Gamma} {\phi'} {e} {y \of \tau \such \psi'} \\\\
        \phi \lthen \phi' \\
        \psi' \lthen \psi
      }{
        \rotrip {\Gamma} {\phi} {e} {y \of \tau \such \psi}
      }

      \inferenceRule{Rw-Imply}{
        \rwtrip {\Gamma; \Delta} {\phi'} {c} {z \of \tau \such \psi'} \\\\
        \phi \lthen \phi' \\
        \psi' \lthen \psi
      }{
        \rwtrip {\Gamma; \Delta} {\phi} {c} {z \of \tau \such \psi}
      } \\

      \inferenceRule{Ro-ExFalso}{
      }{
        \rotrip {\Gamma} { \bot } {e} { \psi }
      }

      \inferenceRule{Rw-ExFalso}{
      }{
        \rwtrip {\Gamma; \Delta} { \bot } {c} { \psi }
      }

      \inferenceRule{Ro-Conj}{
        \rotrip {\Gamma} {\phi} {e} {y \of \tau \such \psi_1} \\
        \rotrip {\Gamma} {\phi} {e} {y \of \tau \such \psi_2}
      }{
        \rotrip {\Gamma} {\phi} {e} {y \of \tau \such \psi_1 \land \psi_2}
      }

      \inferenceRule{Rw-Conj}{
        \rwtrip {\Gamma; \Delta} {\phi} {c} {z \of \tau \such \psi_1} \\
        \rwtrip {\Gamma; \Delta} {\phi} {c} {z \of \tau \such \psi_2}
      }{
        \rwtrip {\Gamma; \Delta} {\phi} {c} {z \of \tau \such \psi_1 \land \psi_2}
      }

      \inferenceRule{Ro-Disj}{
        \rotrip {\Gamma} {\phi_1} {e} {y \of \tau \such \psi} \\
        \rotrip {\Gamma} {\phi_2} {e} {y \of \tau \such \psi}
      }{
        \rotrip {\Gamma} {\phi_1 \lor \phi_2} {e} {y \of \tau \such \psi}
      }

      \inferenceRule{Rw-Disj}{
        \rwtrip {\Gamma; \Delta} {\phi_1} {c} {z \of \tau \such \psi} \\
        \rwtrip {\Gamma; \Delta} {\phi_2} {c} {z \of \tau \such \psi}
      }{
        \rwtrip {\Gamma; \Delta} {\phi_1 \lor \phi_2} {e} {z \of \tau \such \psi}
      }
    \end{mathpar}
    
    \medskip

    \textbf{Variables and constants:}
    \begin{mathpar}
      \inferenceRule{Ro-Var}{
      }{
        \rotrip {\Gamma_1, x \of \tau, \Gamma_2} {\psi[x/y]} {x} {y \of \tau \such \psi}
      }

      \inferenceRule{Ro-Skip}{
      }{
        \rotrip {\Gamma} {\psi} \cskip {\psi}
      }

      \inferenceRule{Ro-Bool-Const}{
        b \in \{\cfalse, \ctrue\}
      }{
        \rotrip {\Gamma} {\psi[b/y]} b {y \of \dB \such \psi}
      }

      \inferenceRule{Ro-Int-Const}{
        k \in \IZ
      }{
        \rotrip {\Gamma} {\psi[k/y]} {\numeral{k}} {y \of \dZ \such \psi}
      }
    \end{mathpar}

\medskip

    \textbf{Passage between read-only and read-write correctness:}
    \begin{mathpar}
      \inferenceRule{Rw-Ro}{
        \rwtrip {\Gamma; \emptyctx} {\phi} {c} {z \of \tau \such \psi}
      }{
        \rotrip {\Gamma} {\phi} {c} {z \of \tau \such \psi}
      }

      \inferenceRule{Ro-Rw}{
        \rotrip {\Gamma, \Delta} {\phi} {e} {y \of \tau \such \psi}
      }{
        \rwtrip {\Gamma; \Delta} {\phi} {e} {y \of \tau \such \psi}
      }
    \end{mathpar}

  \end{mdframed}
  \caption{General proof rules}
  \label{fig:rules-general}
\end{figure}

\begin{figure}
  \centering
  \begin{mdframed}
    \footnotesize
    \centering

    \textbf{Coercion and exponentiation:}
    \begin{mathpar}
      \inferenceRule{Ro-Coerce}{
        \rotrip {\Gamma} {\phi} e {y \of \dZ \such \psi[\inclZ{y}/y]}
      }{
        \rotrip {\Gamma} {\phi} {\ccoerce{e}} {y \of \dR \such \psi}
      }

      \inferenceRule{Ro-Exp}{
        \rotrip {\Gamma} {\phi} {e} {x \of \dZ \such \psi[2^x/y]} 
      }{
        \rotrip {\Gamma} {\phi} {2^e} {y \of \dR \such \psi}
      }

    \end{mathpar}
    
    \medskip

    \textbf{Integer arithmetic:}
    \begin{mathpar}
      \inferenceRule{Ro-Int-Op}{
        \rotrip {\Gamma} {\phi} {e_1} {y_1 \of \dZ \such \psi_1} \\
        \rotrip {\Gamma} {\phi} {e_2} {y_2 \of \dZ \such \psi_2} \\\\
        \psi_1 \land \psi_2 \lthen \psi[(y_1 \op y_2)/y]
      }{
        \rotrip {\Gamma} {\phi} {e_1 \iop e_2} {y \of \dZ \such \psi}
      }
    \end{mathpar}

\medskip

    \textbf{Real arithmetic:}
    \begin{mathpar}
      \inferenceRule{Ro-Real-Op}{
        \rotrip {\Gamma} {\phi} {e_1} {y_1 \of \dR \such \psi_1} \\
        \rotrip {\Gamma} {\phi} {e_2} {y_2 \of \dR \such \psi_2} \\\\
        \psi_1 \land \psi_2 \lthen \psi[(y_1 \op y_2)/y]
      }{
        \rotrip {\Gamma} {\phi} {e_1 \rop e_2} {y \of \dR \such \psi}
      }
    \end{mathpar}

\medskip

    \textbf{Reciprocal:}
    \begin{mathpar}
      \inferenceRule{Ro-Part-Recip}{
        \rotrip[\prt] {\Gamma} {\phi} {e} {x \of \dR \such \theta} \\\\
        \theta \land x \neq 0 \lthen \psi[x^{-1}/y]
      }{
        \rotrip[\prt] {\Gamma} {\phi} {\cinv{e}} {y \of \dR \such \psi}
      }

      \inferenceRule{Ro-Tot-Recip}{
        \rotrip[\tot] {\Gamma} {\phi} {e} {x \of \dR \such \theta} \\\\
        \theta \lthen x \neq 0 \land \psi[x^{-1}/y]
      }{
        \rotrip[\tot] {\Gamma} {\phi} {\cinv{e}} {y \of \dR \such \psi}
      }
    \end{mathpar}

\medskip

    \textbf{Integer comparison ${\sim} \in \{{\ieq}, {\ilt}\}$:}
    \begin{mathpar}
      \inferenceRule{Ro-Int-Comp}{
        \rotrip {\Gamma} {\phi} {e_1} {y_1 \of \dZ \such \psi_1} \\
        \rotrip {\Gamma} {\phi} {e_2} {y_2 \of \dZ \such \psi_2} \\\\
        \psi_1 \land \psi_2 \lthen \psi[(y_1 \sim y_2)/b]
      }{
        \rotrip {\Gamma} {\phi} {e_1 \sim e_2} {b \of \dB \such \psi}
      }
    \end{mathpar}

\medskip

    \textbf{Real comparison:}
    \begin{mathpar}
      \inferenceRule{Ro-Part-Real-Lt}{
        \rotrip[\prt] {\Gamma} {\phi} {e_1} {y_1 \of \dR \such \psi_1} \\
        \rotrip[\prt] {\Gamma} {\phi} {e_2} {y_2 \of \dR \such \psi_2} \\\\
        \psi_1 \land \psi_2 \land y_1 \neq y_2 \lthen \psi[(y_1 < y_2)/b]
      }{
        \rotrip[\prt] {\Gamma} {\phi} {e_1 \rlt e_2} {b \of \dB \such \psi}
      }

      \inferenceRule{Ro-Tot-Real-Lt}{
        \rotrip[\tot] {\Gamma} {\phi} {e_1} {y_1 \of \dR \such \psi_1} \\
        \rotrip[\tot] {\Gamma} {\phi} {e_2} {y_2 \of \dR \such \psi_2} \\\\
        \psi_1 \land \psi_2 \lthen y_1 \neq y_2 \land \psi[(y_1 < y_2)/b]
      }{
        \rotrip[\tot] {\Gamma} {\phi} {e_1 \rlt e_2} {b \of \dB \such \psi}
      }
    \end{mathpar}

\medskip

    \textbf{Limit:}
    \begin{mathpar}
      \inferenceRule{Ro-Lim}{
        \rotrip[t] {\Gamma, x \of \dZ} {\phi} {e} {z \of \dR \such \theta} \\
        \phi \lthen \some{y \in \IR} \psi \land \all{x \in \IZ} \all{z \in \IR} \theta \lthen |z - y| < 2^{-x}
      }{
        \rotrip {\Gamma} {\phi} {\clim{x} e} {y \of \dR \such \psi}
      }
    \end{mathpar}

  \end{mdframed}
  \caption{Arithmetical proof rules}
  \label{fig:rules-arithmetic}
\end{figure}

\begin{figure}
  \centering
  \begin{mdframed}
    \footnotesize
    \centering

    \begin{mathpar}
      \inferenceRule{Rw-Sequence}{
        \rwtrip {\Gamma; \Delta} {\phi} {c_1} {\theta} \\
        \rwtrip {\Gamma; \Delta} {\theta} {c_2} {z \of \tau \such \psi}
      }{
        \rwtrip {\Gamma; \Delta} {\phi} {c_1 ; c_2} {z \of \tau \such \psi}
      }

      \inferenceRule{Rw-New-Var}{
        \rotrip {\Gamma, \Delta} {\phi} {e} {x \of \sigma \such \theta} \\
        \rwtrip {\Gamma; \Delta, x \of \sigma} {\theta} {c} {z \of \tau \such \psi}
      }{
        \rwtrip {\Gamma; \Delta} {\phi} {\cvar{x}{e} c} {z \of \tau \such \psi}
      }

      \inferenceRule{Rw-Assign}{
        \rotrip {\Gamma, \Delta} {\phi} {e} {x \of \tau \such \theta} \\
        \theta \lthen \psi[x/y]
      }{
        \rwtrip {\Gamma; \Delta_1, y \of \sigma, \Delta_2} {\phi} {\clet{y}{e}} {\psi}
      }

      \inferenceRule{Rw-Cond}{
        \rotrip {\Gamma, \Delta} {\phi} {e} {b \of \dB \such \theta } \\
        \rwtrip {\Gamma; \Delta} {\theta[\semtt/b]} {c_1} {z \of \tau \such \psi} \\
        \rwtrip {\Gamma; \Delta} {\theta[\semff/b]} {c_2} {z \of \tau \such \psi} \\
      }{
        \rwtrip {\Gamma; \Delta} {\phi} {\cif e \cthen c_1 \celse c_2 \cend} {z \of \tau \such \psi}
      }

      \inferenceRule{Rw-Part-Case}{
        {\begin{aligned}
        &\rotrip[\prt] {\Gamma, \Delta} {\phi} {e_i} {b \of \dB \such \theta_i }
         &&\text{for $i = 1, \ldots, n$} \\[-0.25em]
        &\rwtrip[\prt] {\Gamma; \Delta} {\theta_i[\semtt/b]} {c_i} {z \of \tau \such \psi}
         &&\text{for $i = 1, \ldots, n$}
        \end{aligned}}
      }{
        \rwtrip[\prt]
          {\Gamma; \Delta}
          {\phi}
          {\ccase e_1 \To c_1 \mid \cdots \mid e_n \To c_n \cend}
          {z \of \tau \such \psi}
      }

      \inferenceRule{Rw-Tot-Case}{
        {\begin{aligned}
          &\rotrip[\prt] {\Gamma, \Delta} {\phi} {e_i} {b \of \dB \such \theta_i}
            &&\text{for $i = 1, \ldots, n$} \\[-0.25em]
          &\rwtrip[\tot] {\Gamma; \Delta} {\theta_i[\semtt/b]} {c_i} {z \of \tau \such \psi}
            &&\text{for $i = 1, \ldots, n$} \\[-0.25em]
          &\phi \lthen \phi_1 \lor \cdots \lor \phi_n \\[-0.25em]
          &\rotrip[\tot] {\Gamma, \Delta} {\phi_i} {e_i} {b \of \dB \such b = \semtt}
            &&\text{for $i = 1, \ldots, n$}
        \end{aligned}}
      }{
        \rwtrip[\tot]
          {\Gamma; \Delta}
          {\phi}
          {\ccase e_1 \To c_1 \mid \cdots \mid e_n \To c_n \cend}
          {z \of \tau \such \psi}
      }

      \inferenceRule{Rw-Part-While}{
        \rotrip[\prt] {\Gamma, \Delta} {\phi} {e} {b \of \dB \such \theta} \\
        \rwtrip[\prt] {\Gamma; \Delta} {\theta[\semtt/b]} {c} {\phi}
      }{
        \rwtrip[\prt] {\Gamma; \Delta} {\phi} {\cwhile e \cdo c \cend} {\phi \land \theta[\semff/b]}
      }

      \inferenceRule{Rw-Tot-While}{
        \rotrip[\tot] {\Gamma, \Delta} {\phi} {e} {b \of \dB \such \theta} \\
        \rwtrip[\tot]
           {\Gamma; \Delta}
           {\theta[\semtt/b]}
           {c}
           {\phi}
        \\\\
        \phi \lthen (\text{$\psi$ well-founded}) \\
        \rwtrip[\tot]
           {\vec{z} \of \Delta', \Gamma; \vec{y} \of \Delta}
           {\theta[\semtt/b] \land \vec{z} = \vec{y} }
           {c}
           {\psi}
      }{
        \rwtrip[\tot] {\Gamma; \Delta} {\phi} {\cwhile e \cdo c \cend} {\phi \land \theta[\semff/b] }
      }

    \end{mathpar}

  \end{mdframed}
  \caption{Operational proof rules}
  \label{fig:rules-operational}
\end{figure}

The general rules in \cref{fig:rules-general} postulate logical principles and govern variables and constants. The rules \rref{Ro-Rw} and \rref{Rw-Ro} allow passage between reasoning about pure and general expressions.

Among the arithmetical rules in \cref{fig:rules-arithmetic} we point out the difference between the partial reciprocal rule \rref{Ro-Part-Recip} and its total variant \rref{Ro-Tot-Recip}. In the former, we may assume the argument to be non-zero, while in the latter we have to prove it.
A similar phenomenon happens with the strict comparison~$x \rlt y$ on the reals, where the partial rule \rref{Ro-Part-Real-Lt} provides the assumption $x \neq y$, whereas its total variant \rref{Ro-Tot-Real-Lt} requires a proof of $x \neq y$. The condition $x \neq y$ appears in these rules because $x \rlt y$ diverges when $x = y$. 

The limit rule \rref{Ro-Lim} states that $\clim{x} e$ computes $y \in \IR$ when~$e$ denotes a sequence rapidly converging to~$t$, i.e., any value computed by~$e$ is within distance~$2^{-x}$ of~$y$.
Note that the first premise imposes \emph{total} correctness, even in the partial version of the rule. This is necessary because the denotation of $\clim{x} e$ is~$\PPerr$ (rather than~$\PPbot$) as soon~$e$ fails to be total.

The first three operational rules in \cref{fig:rules-operational}, which regulate sequencing, new variables and assignment, require no comment.
The remaining rules all handle the conditions and guards in a similar fashion,
so we only look at the rule for conditional \rref{Rw-Cond}, keeping in mind that the boolean expression~$e$ is nondeterministic. The postcondition~$\theta$ describes the \emph{possible} values~$b$ of~$e$. The preconditions $\theta[\semtt/b]$ and $\theta[\semff/b]$ of~$c_1$ and $c_2$ should respectively be read as ``if $e$ evaluates to true'' and ``if $e$ evaluates to false''. 
To see this formally, consider the triple in the premise
\[
\rotrip {\Gamma, \Delta} {\phi} {e} {b \of \dB \such \theta}
\]
and a state $\gamma \in\sem{\Gamma}$ that satisfies the precondition $\phi$.
If $\semtt \in \sem{\Gamma \rotypes e : \dB}\,\gamma$, by the definition of triples, $\gamma$ satisfies $\theta[\semtt/b]$.
Hence, $\theta[\semtt/b]$ is a necessary condition of the states satisfying $\phi$ to let $e$ admit a nondeterministic branch leading to $\semtt$,  allowing a possibility of having some branches fail to terminate when $\star = \prt$.
Due to nondeterminism, the conditions $\theta[\semtt/b]$ and $\theta[\semff/b]$ need not exclude each other, even when~$\theta$ is as informative as it can be.
Incidentally, if desired the original precondition $\phi$ may be incorporated into~$\theta$ using the admissible rule \rref{Ro-Refine}; see \cref{sec:admissible-rules}.

The rules for nondeterministic guarded cases employ a similar technique. The possible values of the guards~$e_i$ are described by the corresponding postconditions~$\theta_i$, after which each case is considered under the
precondition~$\theta_i[\semtt/b]$. Again, $\theta_i[\semtt/b]$ need not preclude $\theta_i[\semff/b]$, nor any of the other preconditions $\theta_j[\semtt/b]$ with $j \neq i$.
One notable point in the total variant \rref{Rw-Tot-Case} is that it is not obtained from \rref{Rw-Part-Case} by changing $\prt$ to $\tot$.
In \rref{Rw-Tot-Case}, each guard is required to be equipped with one partial correctness triple and one total correctness triple:
\[\rotrip[\prt] {\Gamma, \Delta} {\phi} {e_i} {b \of \dB \such \theta_i}
\quad\text{and}\quad
\rotrip[\tot] {\Gamma, \Delta} {\phi_i} {e_i} {b \of \dB \such b = \semtt}
\]
The triple for capturing the values of $e_i$ is allowed to stay \emph{partial}, 
though the guarded case requires to be total. 
The reason is that the total correctness of the overall case expression does not require each of the guards to be total and
we only need to make sure that there is a guard that can be chosen.
The extra implication in the premise 
$\phi \lthen \phi_1 \lor \cdots \lor \phi_n$ ensures that.
The precondition $\phi_i$ of the total correctness triple 
stands for the condition in which $e_i$ terminates and evaluates to and only to $\semtt$. 

Lastly, we explain the rules for the loop $\cwhile e \cdo c \cend$. A superficial reading of \rref{Rw-Part-While} looks suspect because~$c$ seemingly need not satisfy an invariant. One has to read both premises together:  the invariant~$\phi$ starts as the precondition for~$b$ passes to $c$ via an intermediate statement~$\theta$, and emerges as the postcondition for~$c$.

The total rule \rref{Rw-Tot-While} establishes an invariant the same way, and ensures termination by means of a well-founded relation, as follows.
The formula~$\psi$ involves the read-only variables~$\Gamma$, the mutable variables~$\vec{y} : \Delta$, and a read-only copy $\vec{z} : \Delta'$ of the mutable variables.  Given a read-only state $\gamma \in \sem{\Gamma}$, an \defemph{infinite $\psi$-chain} is a sequence $s : \IN \to \sem{\Delta}$ of states such that $\psi[s_i/\vec{z}, \gamma/\vec{x}, s_{i+1}/\vec{y}]$ holds for all $i \in \IN$. Say that~$\psi$ is \defemph{well-founded} under condition $\phi$ when, for every $\gamma \in \sem{\Gamma}$ and $\delta \in \sem{\Delta}$ satisfying $\phi[\gamma/\vec{x}, \delta/\vec{y}]$, there is no infinite $\psi$-chain starting from $\delta$.
In the rule \rref{Rw-Tot-While},
the precondition $\vec{z} = \vec{y}$ and the postcondition~$\psi$ together express the fact that the state just before the execution of~$c$ and the state just after forms a link in a $\psi$-chain. The loop must therefore terminate, or else it would yield an infinite $\psi$-chain.

\begin{theorem}\label{t:sound}
  The proof rules given in \cref{fig:rules-general,fig:rules-arithmetic,fig:rules-operational} are sound:
  a derivable correctness triple is valid.
\end{theorem}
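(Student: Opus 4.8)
The plan is to prove soundness by induction on the derivation of a correctness triple, establishing \emph{local soundness} of each inference rule: assuming every premise triple is valid (in the semantic sense fixed by the four definitions of $\rotrip$ and $\rwtrip$), the conclusion triple is valid. Because the partial and total systems are intertwined---for instance \rref{Ro-Lim} feeds a \emph{total} premise into a partial conclusion, while \rref{Rw-Tot-Case} relies on \emph{partial} guard triples---the induction must range simultaneously over all four triple forms, with the induction hypothesis available for every subderivation irrespective of which form it proves. Throughout, ``valid'' unfolds to two conjuncts: error-freeness ($\sem{\cdot}\neq\PPerr$ on every state satisfying the precondition) and the postcondition holding of every non-$\bot$ outcome, with totality additionally forbidding $\bot$.

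Most rules are discharged by unfolding the denotational semantics and the definition of validity. The logical rules (\rref{Ro-Imply}, \rref{Ro-Conj}, and so on), the rules for variables and constants, and the passage rules \rref{Rw-Ro} and \rref{Ro-Rw} follow directly from the semantic clauses, which return singletons $\pure{\cdot}$ or relate $\sem{\Gamma\rotypes c}$ and $\sem{\Gamma;\emptyctx\rwtypes c}$. The arithmetical rules together with \rref{Rw-Sequence}, \rref{Rw-New-Var}, \rref{Rw-Assign}, and \rref{Rw-Cond} reduce to the monadic structure of $\Pstar$: since $\PPlet{x}{X}\,f(x)$ errs exactly when $X$ errs or some branch errs, and otherwise collects the branchwise outcomes, the premise triples propagate along the binds. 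The only genuine content here is the partial/total split for \rref{Ro-Part-Recip}/\rref{Ro-Tot-Recip} and \rref{Ro-Part-Real-Lt}/\rref{Ro-Tot-Real-Lt}: when $x=0$ (resp.\ $x=y$) the semantics returns $\PPbot$ rather than $\PPerr$, so partial correctness may freely \emph{assume} $x\neq 0$ (resp.\ $x\neq y$) on terminating branches, whereas totality must \emph{prove} it in order to exclude the $\bot$. For \rref{Ro-Lim}, the total premise on $e$ guarantees each $\sem{e}(\gamma,k)$ is a nonempty, $\bot$-free subset of $\IR$, and the side condition supplies a common limit $y$ with $|u-y|<2^{-k}$ for every such $u$; this is exactly the first clause of the limit semantics, so $\sem{\clim{x}{e}}\gamma=\pure{y}$.

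The guarded-case rules require a short analysis of the case semantics. For \rref{Rw-Part-Case}, the partial guard triples make each $\sem{e_i}\neq\PPerr$, and whenever $\semtt\in\sem{e_i}$ the precondition $\theta_i[\semtt/b]$ of $c_i$ holds, so by the partial body triples no selected branch errs and every terminating outcome in the set $S$ satisfies $\psi$; the possibly-added $\bot$ is harmless for partial correctness. For \rref{Rw-Tot-Case} we must additionally exclude $\bot$, which the semantics includes unless some $\sem{e_i}(\gamma,\delta)=\pure{\semtt}$. Here the premises $\phi\lthen\phi_1\lor\cdots\lor\phi_n$ and $\rotrip[\tot]{\Gamma,\Delta}{\phi_i}{e_i}{b\of\dB\such b=\semtt}$ guarantee that, on any $\phi$-state, some guard evaluates deterministically to $\pure{\semtt}$; hence $\bot$ is excluded, the corresponding body is total, and the whole expression is totally correct.

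The crux is the pair of while rules, whose meaning is the least fixed point $\mathsf{LFP}(W_\gamma)=\dsup{n}W_\gamma^n(\delta\mapsto\PPbot)$. For \rref{Rw-Part-While} I would use fixed-point (Scott) induction on the predicate $P(h)$ stating that, on every $\delta$ satisfying the invariant $\phi$, one has $h\,\delta\neq\PPerr$ and every non-$\bot$ outcome satisfies $\phi\land\theta[\semff/b]$: one checks that $P$ holds of $\delta\mapsto\PPbot$, is preserved by $W_\gamma$ (using the guard and body premises, which keep $\phi$ invariant and error-free), and is \emph{admissible}, i.e.\ closed under suprema of $\PPleq$-chains, since any non-$\bot$ element of a supremum already occurs in some approximant and error-freeness survives both forms the supremum can take in \cref{prop:domain}. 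Then $P(\mathsf{LFP}(W_\gamma))$ yields partial correctness. The main obstacle is \rref{Rw-Tot-While}, where I must show $\bot\notin\mathsf{LFP}(W_\gamma)\,\delta$ for every $\phi$-state $\delta$. Suppose not; since the total body premise makes each $\sem{c}\gamma\delta$ a finite, $\bot$-free set, the property ``$\bot\in W_\gamma^n(\delta\mapsto\PPbot)\,\delta'$ for all $n$'' persists along some body-successor at each stage. Because the branching is finite and the sets of persisting successors form a nested decreasing family, a pigeonhole (finite-branching König) argument extracts an \emph{infinite} sequence $\delta=\delta_0,\delta_1,\dots$ of states, each satisfying $\phi$ and linked to the next by one execution of the body. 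By the premise $\rwtrip[\tot]{\vec z\of\Delta',\Gamma;\vec y\of\Delta}{\theta[\semtt/b]\land\vec z=\vec y}{c}{\psi}$ this is an infinite $\psi$-chain, contradicting well-foundedness of $\psi$ under $\phi$. Hence the loop terminates, and combined with error-freeness (argued as in the partial case) we obtain total correctness. Assembling the local soundness of all rules via the simultaneous induction completes the proof.
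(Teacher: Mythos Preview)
Your proof is correct and follows the same inductive strategy as the paper's own (very brief) proof, which likewise proceeds by verifying local soundness of each rule but spells out only the \rref{Ro-Lim} case in any detail. Your additional arguments for the case and while rules---Scott induction for \rref{Rw-Part-While} and the finite-branching K\"onig-style extraction of an infinite $\psi$-chain for \rref{Rw-Tot-While}---are sound and fill in exactly the details the paper leaves to its Coq formalization.
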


\begin{proof}
  The proof proceeds by induction on the derivation of a triple, which amounts to checking for each rule that its conclusion is valid if the premises are.
  Establishing the soundness of general rules in \cref{fig:rules-general} is just an easy exercise in logic.

  To check the soundness of the arithmetical rules in \cref{fig:rules-arithmetic} one has to unwind the semantics of the premises and conclusions, and compare them to the denotational semantics of expressions involved.
  We spell out just the total version of the limit rule \rref{Ro-Lim}.
  Consider any $\gamma \in \sem{\Gamma}$ and $k \in \IZ$ such that $\phi(\gamma, k)$. The first premise guarantees that $\sem{e}(\gamma, k) \subseteq \IR$, as well as $\theta(\gamma, k, u)$ for all $u \in \sem{e}(\gamma, k)$.
  The second premise further ensures the existence of $t \in \IR$ such that $\psi(\gamma, k, t)$ and $\all{u \in \sem{e} (\gamma, k)} |u - t| \leq 2^{-k}$. Thus, the conditions of the first clause in the semantics of $\clim{x} e$ in \cref{figure:ro-denotations} are met, hence $\sem{\clim{x} e}\,\gamma = \{t\}$, yielding the desired conclusion.
\end{proof}

\subsection{Admissible rules}
\label{sec:admissible-rules}

Some useful admissible rules are collected in \cref{fig:rules-admissible}. There are verified by structural induction on the derivation of the premises, see \cref{sec:formalization}.
The rules \rref{Ro-Tot-Part} and \rref{Rw-Tot-Part} allow us to pass from total to partial correctness.
The rules \rref{Ro-Refine} and \rref{Rw-Refine} are useful for eliding an statement~$\theta$ that does not play any role in a proof, where the latter rule requires on the side that~$\theta$ not mention the read-write variables.

\begin{figure}[htbp]
  \centering
  \begin{mdframed}
    \footnotesize
    \centering

\begin{mathpar}
  \inferenceRule{Ro-Tot-Part}{
    \rotrip[\tot] {\Gamma} {\phi} {e} {y \of \tau \such \psi}
  }{
    \rotrip[\prt] {\Gamma} {\phi} {e} {y \of \tau \such \psi}
  }

  \inferenceRule{Rw-Tot-Part}{
    \rwtrip[\tot] {\Gamma; \Delta} {\phi} {c} {z \of \tau \such \psi}
  }{
    \rwtrip[\prt] {\Gamma; \Delta} {\phi} {c} {z \of \tau \such \psi}
  }

  \\

  \inferenceRule{Ro-Refine}{
    \rotrip {\Gamma} {\phi} {e} {y \of \tau \such \psi}
  }{
    \rotrip {\Gamma} {\phi \land \theta} {e} {y \of \tau \such \psi \land \theta}
  }

  \inferenceRule{Rw-Refine}{
    \rwtrip {\Gamma;\Delta} {\phi} {e} {y \of \tau \such \psi} \\
    \fv{\theta} \cap \dom{\Delta} = \emptyset
  }{
    \rwtrip {\Gamma;\Delta} {\phi \land \theta} {e} {y \of \tau \such \psi \land \theta}}
\end{mathpar}

  \end{mdframed}
  \caption{Admissible rules}
  \label{fig:rules-admissible}
\end{figure}

\subsection{Specification of functions and function calls}
\label{sec:spec-first-order}

Given a function definition
\begin{equation*}
  \cfunction f {x_1 \of \tau_1, \ldots, x_n \of \tau_n} {\sigma} {e}
\end{equation*}
the rule governing function calls to~$f$ is
\begin{equation*}
  \inferenceRule{Ro-Call}{
    \rotrip {\Gamma} {\phi} {e_i} {x_i \of \tau_i \such \theta_i} \quad\text{for $i = 1, \ldots, n$}
    \\\\
    \rotrip {\Gamma, x_1 \of \tau_1, \ldots, x_n \of \tau_n}
            {\theta_1 \land \cdots \land \theta_n}
            {e}
            {y \of \sigma \such \psi}
  }{
    \rotrip {\Gamma} {\phi} {f(e_1, \ldots, e_n)} {y \of \sigma \such \psi}
  }
\end{equation*}
It is typically used indirectly, as follows. Upon defining~$f$, we prove an assertion
\begin{equation}
  \label{eq:f-assertion}
  \rotrip
  {x_1 \of \tau_1, \ldots, x_n \of \tau_n}
  {\phi_f}
  {e}
  {y \of \sigma \such \psi_f}
\end{equation}
that is deemed to usefully characterize the body of the definition~$e$. Then, to establish
\begin{equation}
  \label{eq:call-target}
  \rotrip {\Gamma} {\phi} {f(e_1, \ldots, e_n)} {y \of \sigma \such \psi}
\end{equation}
we prove for each $i = 1, \ldots, n$ an assertion
$\rotrip {\Gamma} {\phi} {e_i} {x_i \of \tau_i \such \theta_i}$
such that the implication $\theta_1 \land \cdots \land \theta_n \lthen \phi_f$ holds, we verify $\psi_f \lthen \psi$, and appeal to \rref{Ro-Call} and~\eqref{eq:f-assertion} to conclude~\eqref{eq:call-target}.
The method is demonstrated in the next section.


\section{Example: computation of \texorpdfstring{$\pi$}{π}}
\label{sec:example}

To showcase Clerical and its specification logic, we construct a program that computes~$\pi$ and prove that it really does so.
We carry out the task in three steps: the definition of absolute value, the definition of~$\sin$, and a computation of~$\pi$ by a zero-finding procedure. Because the definition of $\sin$ uses absolute values, and computation of~$\pi$  uses~$\sin$, we first define 
functions $\mathtt{abs}$ and $\mathtt{sin}$, using the function notation of \cref{sec:first-order-func}.

\paragraph*{Absolute value}
\label{sec:absolute-value}

As a warm-up exercise we prove that the function $\mathtt{abs}$ defined in \cref{fig:abs-def}, whose body we already discussed in \cref{sec:syntax}, computes absolute values.
\begin{figure}[htb]
  \centering
  \begin{mdframed}
  \small
\begin{lstlisting}
let $\mathtt{abs}$(x:$\dR$):$\dR$ :=
 lim n.
   case
      $x \rlt 2^{\iminus n\iminus  \numeral{1}}$ =>  $~\rminus x$ 
    | $\rminus 2^{\iminus n\iminus \numeral{1}} \rlt x$ => $~x$
   end
\end{lstlisting}
  \end{mdframed}
  \caption{Implementation of absolute value.}
  \label{fig:abs-def}
\end{figure}

\begin{figure}[htb]
{\centering\small
\begin{mdframed}
\begin{lstlisting}
[\claim{ \top }]
lim n .
  [\claim{ \top }]
  case
    [\claim{ \top }] $x \rlt 2^{\iminus n\iminus  \numeral{1}}$ [\claimp{ b :  \dB \such b  \lthen x < 2^{-n-1} }]  =>
    [\claim{ x < 2^{-n-1}}] $\rminus x$ [\claimt{ z  :  \dR \such \abs{z - \abs{x}} < 2^{-n} }]
  | [\claim{ \top }] $\rminus 2^{\iminus n\iminus \numeral{1}} \rlt x$ [\claimp{ b :  \dB \such b \lthen - 2^{-n-1} < x}] =>
    [\claim{\rminus 2^{-n-1} < x}] x [\claimt{ z  :  \dR \such \abs{z - \abs{x}} < 2^{-n} }]
  end
  [\claimt{ z  :  \dR \such \abs{z - \abs{x}} < 2^{-n} }]
[\claimt{ y  :  \dR \such y = \abs{x} }]
\end{lstlisting}
\end{mdframed}
}
  \caption{Specification of $\mathtt{abs}$.}
  \label{fig:abs-correct}
\end{figure}
The specification and proof of correctness of the body of~$\mathtt{abs}$ is shown in \cref{fig:abs-correct}.
As is customary, we interleaved code and assertions to give a sequence of reasoning steps leading from the initial precondition to the final postcondition. That is, each line of code is preceded by a precondition and succeeded by a postcondition, which doubles as the precondition for the next line code.

The outer assertion in \cref{fig:abs-correct} states that any value $y : \dR$ computed by the limit equals~$\abs{x}$.  This is established by an application of \rref{Ro-Lim}, which creates the obligation that the any value $z : \dR$ computed by the case expression is within $2^{-n}$ of~$\abs{x}$. This in turn is proved by an applying \rref{Rw-Tot-Case}, which generates obligations not shown in \cref{fig:abs-correct}, namely:
\begin{itemize}
\item $\top \lthen x < 2^{-n - 1} \lor -2^{-n-1} < x$,
\item $\{ x < 2^{-n-1} \}\; x < 2^{-n - \numeral{1}}\; \{ b : \dB \mid b \}^\tot$,
\item $\{ -2^{-n-1} < x \}\; -2^{-n - \numeral{1}} < x \; \{ b : \dB \mid b \}^\tot$.
\end{itemize}
These obviously hold.
Finally, there is an additional obligation due to \rref{Ro-Lim},
\begin{equation*}
  \top \lthen
   \some{y \in \IR}
   y = \abs{x} \land
   \all{n \in \IZ, z\in\IR} \abs{z - \abs{x}} < 2^{-n} \lthen \abs{z - y} < 2^{-n},
\end{equation*}
which holds because we may take $y = \abs{x}$.

\paragraph*{Sine function}

We compute the sine function by employing its Taylor expansion at~$0$:
\begin{equation*}
  \sin(x) = \sum_{i = 0}^\infty \frac{(-1)^i x^{2i+1}}{(2i+1)!}.
\end{equation*}
The method is inefficient for large~$x$, but is good enough for $3 < x < 4$, which is what we need.
Define the terms of the expansion and the partial sums
\begin{equation*}
  \mathbf{t}(i, x) \defeq \frac{(-1)^i x^{2i+1}}{(2i+1)!}
  \qquad\text{and}\qquad
  \mathbf{S}(j, x) \defeq \sum_{i=0}^j \mathbf{t}(i, x),
\end{equation*}
and recall the error bound for the $j$-th partial sum
\begin{equation}
  \label{eq:exa:sine-bound}
  \abs{\sin(x) - \mathbf{S}(j,x)} \leq \abs{\mathbf{t}(j+1, x)}.
\end{equation}
The recursive relations
\begin{align*}
  \mathbf{t}(j+1, x) &= -\mathbf{t}(j, x) \cdot x^2 / ((2j + 2)(2j + 3)), \\
  \mathbf{S}(j + 1, x) &= \mathbf{S}(j, x) + \mathbf{t}(j + 1, x)
\end{align*}
with initial conditions $\mathbf{t}(0, x) = x$ and $\mathbf{S}(0, x) = x$ can be converted to computation by while loops. We do so in the definition of $\mathtt{sin}$, shown in \cref{fig:sin-def}.
Notice how the stopping condition for the while loop uses nondeterminism to ensure totality. It may stop if the error bound $\mathtt{abs}(t)$ is smaller than $2^{-n}$, or continue if it is larger than $2^{-n-1}$.
\begin{figure}
  \centering
  \begin{mdframed}
  \small
\begin{lstlisting}
let $\mathtt{sin}$(x:$\dR$):$\dR$ :=
  lim $n$.
    var j := $\numeral{0}$ in
    var S := $x$ in
    var t := $-x \times x \times x/\ccoerce{\numeral{6}}$ in
    while
      (case $2^{\iminus n \iminus \numeral{1}} \rlt \mathtt{abs}(t)$=>true | $~\mathtt{abs}(t) \rlt 2^{\iminus n}$=>false)
    do
      j := j + $\numeral{1}$ ;
      S := S + t ;
      t := $-t \times x \times x/\ccoerce{\numeral{2} \imult j \iplus \numeral{3}}$
    end ;
    $S$
\end{lstlisting}
  \end{mdframed}
  \caption{The definition of $\mathtt{sin}$}
  \label{fig:sin-def}
\end{figure}

The correctness of the implementation is outlined in \cref{fig:sin-correct}.
The loop invariant, abbreviated as $\phi$, is
$j \geq 0 \land S = \mathbf{S}(j, x) \land t = \mathbf{t}(j + 1, x)$.
Upon exit from the loop, the invariant ensures that $S$ is the $j$-th partial sum, and the exit condition $|t| < 2^{-n}$ that the error is sufficiently small. Together they imply that $S$ is within $2^{-n}$ of $\sin(x)$, as required by \rref{Ro-Lim}.
We elide several side conditions, such us those imposed by \rref{Ro-Lim} and \rref{Rw-Tot-Case}.
The complete details of the formal proof are available in the Coq formalization, see \cref{sec:formalization}.

\begin{figure}[htb]
  \centering
  \begin{mdframed}
  \small
\begin{lstlisting}
[\claimt{\top}]
lim $n$.
  [\claimt{\top}]
  var j := $\numeral{0}$ in
  var S := $x$ in
  var t := $-x \times x \times x/\ccoerce{\numeral{6}}$ in
  [\claimt{\phi}]      [where $\phi$ is $j \geq 0 \land S = \mathbf{S}(j,x) \land t = \mathbf{t}(j + 1, x)$]
  while
    [\claimt{\phi}]
    (case $2^{\iminus n \iminus \numeral{1}} \rlt \mathtt{abs}(t)$=>true | $~\mathtt{abs}(t) \rlt 2^{\iminus n}$=>false)
    [\claimt{b : \dB\such \phi \land (b \lthen 2^{-n-1} < \abs{t}) \land (\neg b \lthen \abs{t} < 2^{-n})}]
  do
    [\claimt{\phi \land 2^{-n-1}< \abs{t}}]
    [\claimt{j \geq 0 \land S = \mathbf{S}(j,x) \land t = \mathbf{t}(j + 1, x)}]
    j := j + $\numeral{1}$ ;
    [\claimt{j - 1 \geq 0 \land S = \mathbf{S}(j-1,x) \land t = \mathbf{t}(j, x)}]
    S := S + t ;
    [\claimt{j - 1 \geq 0 \land S = \mathbf{S}(j,x) \land t = \mathbf{t}(j, x)}]
    t := $-t \times x \times x/\ccoerce{\numeral{2} \imult j \iplus \numeral{3}}$
    [\claimt{j - 1 \geq 0 \land S = \mathbf{S}(j,x) \land t = \mathbf{t}(j+1, x)}]
    [\claimt{\phi}]
  end ;
  [\claimt{\phi \land \abs{t} < 2^{-n}}]
  [\claimt{S = \mathbf{S}(j, x) \land \abs{\mathbf{t}(j + 1, x)} < 2^{-n}}]
  $S$
  [\claimt{z  :  \dR\such \abs{z - \sin(x)} < 2^{-n}}]
[\claimt{y  :  \dR\such y = \sin(x)}]
\end{lstlisting}
  \end{mdframed}
  \caption{Specification of $\mathtt{sin}$.}
  \label{fig:sin-correct}
\end{figure}

\paragraph*{Computation of $\pi$}

We compute $\pi$ by using a root-finding algorithm to find the unique root of $\sin$ in the closed interval $[3,4]$.
For this, we could use the general root-finding algorithm in~\cite{brausse2016semantics}, which repeatedly narrows the search interval by splitting it into two overlapping intervals each $2/3$ the width of the original interval.
However, the fact that  we are computing a root that is known to be irrational, allows us to instead use a more efficient bisection-based search.
Once again, we proceed by computing a limit whose $n$-th term approximates the root within $2^{-n}$.
The code is shown in \cref{fig:pi-def}.
The bisection method is initialized with the lower bound $l = 3$ and the upper bound $u = 4$.
At each step of the iteration, we compute the midpoint $m = \frac{l + u}{2}$ and narrow the search interval to either $[m, u]$ or $[l, m]$, depending on the sign of $\sin(m)$. The irrationality of the  root guarantees that the comparison $0 < \sin(m)$ always terminates, because $m$ is rational and so $\sin(m) \neq 0$.
The above considerations are incorporated into the loop invariant
\begin{equation*}
  l \in \IQ \land u \in \IQ \land 3 \leq l < \pi < u \leq 4 \land u - l= 2^{-k}.
\end{equation*}
We leave detailed verification of correctness as exercise for the reader, who can also avoid doing it by consulting the formalized proof~\cite{clerical_coq}.
\begin{figure}
  \centering
  \begin{mdframed}
  \small
\begin{lstlisting}
lim $p$.
  var k := 0 in
  var l := $\ccoerce{\numeral{3}}$ in
  var u := $\ccoerce{\numeral{4}}$ in
  while k < n do
    var m := $(l \rplus u) \rmult 2^{-\numeral{1}}$ in
    if $\ccoerce{\numeral{0}} \rlt \mathtt{sin}(m)$ then
      l := m
    else
      u := m
    end ;
    k := $k \iplus \numeral{1}$
  end ;
  $(l \rplus u) \rmult 2^{\iminus \numeral{1}}$
\end{lstlisting}
  \end{mdframed}
  \caption{Computation of $\pi$.}
  \label{fig:pi-def}
\end{figure}

\section{Implementation}
\label{sec:implementation}

We turn attention to how Clerical, or a language based on it, might be implemented in practice.
A sensible implementation ought to work in such a way that an error-free well-typed \defemph{program} (a closed expression) $e$ of type~$\tau$ evaluates to one of its denotations, i.e., if $\sem{\emptyctx \rotypes e : \tau}() \neq \emptyset$ then $e$ evaluates to any $v \in \sem{\emptyctx \rotypes e : \tau}()$. More precisely, $e$ ought to evaluate to a \emph{representation} of~$v$, with the proviso that~$\bot$ corresponds to a non-terminating evaluation.

An implementation is certainly possible in principle. To see this, we can follow the approach of~\cite{brausse2016semantics} to show that Clerical programs are computable in the sense of Type Two Effectivity~\cite{w00}. We first endow each datatype~$\tau$ with a standard Baire space representation. In particular, reals are encoded by rapidly converging sequences of (encoded) rationals. We claim that whenever $\Gamma \rotypes e : \tau$ there is a type two Turing machine~$M$ which takes as input a representation of $\gamma \in \sem{\Gamma}$ and
\begin{itemize}
\item either $\sem{\Gamma \rotypes e : \tau} \, \gamma = \emptyset$, or
\item $M$ diverges and $\bot \in \sem{\Gamma \rotypes e : \tau} \, \gamma$, or
\item $M$ outputs a representation of an element of $\sem{\Gamma \rotypes e : \tau} \, \gamma$.
\end{itemize}
The construction of~$M$ proceeds recursively on the structure of~$e$. The most interesting is the $\ccase$ statement, which is implemented by combining the machines that compute the guards into a single machine that interleaves the guard computations and proceeds with the case whose guard first evaluates to~$\ctrue$.

More interesting and relevant is the question of an actual implementation of Clerical.
We implemented a proof-of-concept interpreter in OCaml, available at~\cite{clerical_ocaml}.
The connoisseurs will recognize the strong influence of the iRRAM package for exact-real arithmetic~\cite{muller2000irram}, which we gladly acknowledge.

We use the GNU MPFR library~\cite{mpfr} to compute with large integers and multiple-precision floating-point numbers.
During evaluation, real numbers are approximated by intervals whose endpoints are represented by multiple-precision floating-point numbers, rounded at the current \defemph{working precision}.
As the computation progresses, rounding, interval arithmetic, and limit approximations contribute to making the intervals ever wider.
If the intervals approximating $x$ and $y$ in a comparison $x \rlt y$ overlap, its Boolean value cannot be computed and evaluation is aborted due to \defemph{loss of precision}. The control returns to the top level, where the entire computation is restarted with higher working precision. The computation of a reciprocal $x^{-1}$ behaves analogously in case the interval approximating~$x$ overlaps with~$0$.
If the top level needs to output the result of a computation at a given precision, it keeps restarting it with ever higher working precision until the desired output precision is achieved.

The most interesting part of the interpreter is the implementation of guarded case
\begin{equation*}
  \ccase e_1 \To c_1 \mid \cdots \mid e_n \To c_n \cend.
\end{equation*}
Our semantics demands that one of the branches~$c_j$ must evaluate so long as one of the guards necessarily evaluates to $\ctrue$. Therefore, we cannot evaluate the guards $e_1, \ldots, e_n$ sequentially one after the other, lest the evaluation get stuck in a non-terminating guard.
We employed algebraic effects and handlers~\cite{plotkin09:_handl_algeb_effec,bauer15:_progr}, which are supported in OCaml~5, to implement cooperative threads that interleave the computations of the guards. The threads periodically yield control to the main scheduler, which enforces a round-robing evaluation strategy.
As soon as one of the threads~$e_i$ evaluates to~$\ctrue$, the other ones are aborted and the computation proceeds with the corresponding case~$c_i$.
We are looking forward to future experimentation with parallel execution of guards, which can be implemented using OCaml~5 multi-core features~\cite{sivaramakrishnan22:_retrof_concur}.


\section{Formalization}
\label{sec:formalization}

We formalized Clerical in the Coq proof assistant, and proved important properties presented in this paper. 
The formalization can also be used to formally type-check and verify Clerical expressions with respect to given specifications.
This has been done for the examples from \cref{sec:example}, their correctness proofs included.
Here we outline the organization of the formalization, which is freely available at~\cite{clerical_coq}:
\begin{description}[style=nextline,font=\normalfont]
\item[\coqpath{BaseAxioms}]
  Coq's type of propositions \coqcode{Prop} is intuitionistic, but we work in a classical
  metatheory. Thus we assume excluded middle for \coqcode{Prop}, dependent choice, as well as function extensionality and propositional extensionality.

\item[\coqpath{Powerdomain}]
  This part formalizes the domain-theoretic aspects of the powerdomain, as well as auxiliary constructions
  required in the formalization of the denotational semantics.

\item[\coqpath{Syntax}, \coqpath{Typing}, \coqpath{TypingProperties}]
  These parts formalize the syntax and typing rules of Clerical from \cref{sec:syntax}.
  We use de Bruijn indices to implement variables. We also prove several properties of the
  type system, such as uniqueness of typing (an expression has at most one type).

\item[\coqpath{Semantics}, \coqpath{SemanticProperties}]
  The formalization of the denotational semantics relies on the formalization of the powerdomain and uses the real numbers from Coq's standard library as the denotation of the type of reals.
  The denotation of a term proceeds by induction of its typing derivation. We prove that the denotation does not
  depend on the derivation, but only on the term.

\item[\coqpath{Specification}, \coqpath{ReasoningRules}, \coqpath{ReasoningSoundness}, \coqpath{ReasoningAdmissible}]
  These modules formalize Hoare triples, their reasoning rules, and show them to be sound with respect
  to the denotational semantics. We also show that the rules given in \cref{sec:admissible-rules} are admissible.

\item[\coqpath{examples}, \coqpath{Mathematics}, \coqpath{Arith}]
  We formalized the examples from \cref{sec:example}, where we assumed several familiar facts about~$\pi$ in \coqpath{Mathematics}, for example that if $3 < x < 4$ and $\sin x = 0$ then $x = \pi$.

\item[\coqpath{Arith}]
  In the formalized syntax of Clerical, every expression must be well-typed, and consequently equipped with
  a formal typing derivation. Constructing these is straightforward, but creating them by hand is time-consuming.
  We therefore provided automation for constructing well-typed arithmetical expressions involving variables,
  constants, arithmetical operations, comparisons, and coercions.
  We additionally automated proofs showing that such expressions satisfy the expected partial and total correctness specifications.
  The automation is used extensively in the formalized examples.
\end{description}

\section{Future work}
\label{sec:future-work}

In conclusion we discuss several directions for further work.

One is to explore how Clerical could be extended to include higher-order functions and general recursion.
Incorporating higher-order function without recursion should be straightforward from a language-design viewpoint. However, generalising the denotational semantics to cover higher-order functions may not be so straightforward, since the powerdomain $\PP{S}$ from 
Section~\ref{sec:denotation} will have to be generalised to allow $S$ to range over denotations of arbitrary types. 
With regards to general recursion, the non-monotonicity phenomena associated with the guarded case construct are likely to make it very challenging to define denotational semantics; see~\cite{LEVY2007221} for a discussion of related issues.

Second, in this paper we have not presented any formal operational semantics for Clerical.
Having one would provide an alternative and direct
account of the computability of the language, as well as a
framework within which implementation-relevant information, such as
the scope for parallelism in the execution strategy, could be studied in a mathematical setting. Also, a formally specified operational semantics
could guide implementations of Clerical and Clerical-like languages, and help estaliblish their correctness.

Third, we could further experiment with our implementation, which is good enough to evaluate the~$\pi$ program but cannot compete with the mature libraries for exact-real numbers.
To speed it up, we should at least implement parallel execution of threads, which is supported by the latest OCaml version.
A more substantive improvements would explore better evaluation strategies for nondeterminism, and compilation to a more efficient low-level language.

Fourth, there is significant room for improvement in the Coq formalization. By implementing better automation and tactics for proving correctness assertions, we would obtain a workable environment for formal verification of exact real computation, supported by the formidable machinery of Coq.

\bibliographystyle{alpha}
\bibliography{references.bib}

\end{document}